\theoremstyle{remark}
\newtheorem{remark}{Remark}[section]
\theoremstyle{definition}
\newtheorem{theorem}{Theorem}[section]
\newtheorem{proposition}{Proposition}[section]
\newtheorem{definition}{Definition}[section]
\newtheorem{example}{Example}[section]
\newcommand{\cinf}{$\mathcal{C}^{\infty}$}
\title{$\mathcal{C}^{\infty}$-symmetries of distributions and integrability
\thanks{\textit{\underline{Citation}}: 
\textbf{Authors. Title. Pages.... DOI:000000/11111.}} 
}
\author{
  *A.J. Pan-Collantes, A. Ruiz, C. Muriel, J.L. Romero \\
  Departamento de Matem\'aticas\\ 
  Universidad de C\'{a}diz - UCA \\
  Puerto Real\\
  \texttt{\{antonio.pan@uca.es, adrian.ruiz@uca.es,concepcion.muriel@uca.es,juanluis.romero@uca.es\}} \\
}
\begin{document}
\maketitle

\begin{abstract}
An extension of the notion of solvable structure for involutive distributions of vector fields is introduced. The new structures are based on a generalization of the concept of symmetry of a distribution of vector fields, inspired in the extension of Lie point symmetries to $\mathcal{C}^{\infty}$-symmetries for ODEs developed in the recent years.  
These new objects, named \cinf-structures, play a fundamental role in the integrability of the distribution: the knowledge of  a \cinf-structure for a corank $k$ involutive distribution permits  to find its integral manifolds by solving  $k$ successive completely integrable Pfaffian equations. These results have important consequences for the integrability of differential equations. In particular,  we derive a new procedure to integrate an $m$th-order ordinary differential equation by splitting the problem into $m$ completely integrable Pfaffian equations. This step-by-step integration procedure is applied to integrate completely several equations that cannot be solved by standard procedures.
\end{abstract}

\keywords{symmetry of a distribution \and solvable structure \and  \cinf-symmetry of a distribution \and \cinf-structure \and Frobenius integrability\and differential equations}

\section{Introduction}
\textbf{Note}: The published version of this preprint is in  \href{https://www.sciencedirect.com/science/article/pii/S0022039622006982}{Journal of Differential Equations
Volume 348, 5 March 2023, Pages 126-153}.

 It is well known that for a first-order ordinary differential equation (ODE) Lie symmetry  method permits integration by quadrature when a symmetry is known \cite{blumanlibro,olver86,stephani,ibragimovlibro}. Indeed, the knowledge of a Lie point symmetry for a first-order ODE is equivalent to finding a first integral and its corresponding integrating factor. For $m$th-order equations, a one-parameter group of Lie point symmetries permits to reduce the order by one; once the reduced equation has been solved, the solution of the original equation can be constructed by a quadrature. Similarly, a $r$-parameter group of Lie point symmetries with a solvable Lie algebra reduces to a $(m-r)$th-order ODE, and the general solution of the equation can be recovered by $r$ quadratures from the solution of the reduced equation. When $m=r$, the equation can completely be solved after  $m$ quadratures.

However, the existence of a solvable group of Lie point symmetries is not a necessary condition for the integrability by quadratures. Indeed, there exist many examples of ODEs with a trivial group of Lie point symmetries  that can be integrated \cite{olver86,artemioecu,blumanreid,muriel03lie}. This suggests that there must exist objects beyond Lie point symmetries that explain such integrability processes. Among them, solvable structures and $\mathcal{C}^{\infty}$-symmetries have played an important role in problems of integrability of ODEs in the recent literature.

In 1991 Basarab-Horwath characterized the integrability by quadratures by using the concepts of symmetry and solvable structure \cite{basarab,hartl1994solvable} for involutive distributions of vector fields. When applied to the integration of a given ODE, the procedure requires starting with a symmetry $X$ of the distribution spanned by the vector field $A$ associated to the  ODE, i.e., $X$ must satisfy the condition \begin{equation}\label{corchete1}
    [X,A]=\rho A
\end{equation} for some smooth function $\rho.$   The procedure pursues with a symmetry of the distribution generated by $\{A,X\}$ 
and continues  by using an additional symmetry of the subsequent distributions in each stage. In this context, the vector fields $X$ satisfying \eqref{corchete1} correspond to  generalized symmetries \cite{olver86} of the given ODE, also called dynamical symmetries \cite{stephani} or higher-order symmetries \cite{blumanlibro}.

On the other hand, the concept of $\mathcal{C}^{\infty}$-symmetry (or $\lambda$-symmetry) was introduced in \cite{muriel01ima1}, as a  generalization of the notion of Lie point symmetry.  That  concept is based on a new method of prolonging a vector field by using a smooth function $\lambda.$  The resulting $\lambda$-prolonged vector field $X$  satisfies a weaker condition than that of \eqref{corchete1}:  \begin{equation}\label{corchete2}
    [X,A]=\lambda X+\rho A.
\end{equation} Only when $\lambda=0,$ $X$ is a symmetry of the distribution generated by $A$.  Like Lie point symmetries,  $\mathcal{C}^{\infty}$-symmetries can be used to reduce the order or to find first integrals of ODEs.  In this way there are more chances to achieve order reductions or exact solutions, even in the absence of Lie point symmetries  \cite{olver86,artemioecu,blumanreid,muriel03lie}. 

Inspired by these two approaches, in Section \ref{mainresults}  we firstly introduce the concept of \textit{$\mathcal{C}^{\infty}$-symmetry}  of an involutive distribution  of vector fields  $\mathcal{Z}$.
Secondly, the concept of solvable structure is extended by using $\mathcal{C}^{\infty}$-symmetries of distributions instead of symmetries. In this way we achieve a generalization of the notion of solvable structure, which will be called a \textit{structure of $\mathcal{C}^{\infty}$-symmetries}  (or simply a \textit{$\mathcal{C}^{\infty}$-structure}) for $\mathcal{Z}$. We also provide the concept of \textit{$\mathcal{C}^{\infty}$-structure of 1-forms} for $\mathcal{Z},$ establishing the correspondence between these new two concepts (Proposition \ref{qsolv_struct_dual}). After that, we investigate how the main result in \cite{basarab}, concerning the integrability by quadratures through solvable structures, can be generalized when $\mathcal{C}^{\infty}$-structures are considered. It is proven that the knowledge of a $\mathcal{C}^{\infty}$-structure for a corank $k$ involutive distribution permits to find the integral manifolds by solving $k$ completely integrable Pfaffian equations (Theorem \ref{maintheorem}).

Since the elements of a \textit{$\mathcal{C}^{\infty}$-structure} satisfy weaker conditions than those required for a solvable structure, the new approach enlarges the class of vector fields that can be used to integrate involutive distributions. On the other hand,  it must be said that a drawback is that the  Pfaffian equations associated to a   \textit{$\mathcal{C}^{\infty}$-structure}, although completely integrable,  might not be defined by closed 1-forms, as in the case of solvable structures. 

These results are firstly established in the  general context of distributions and later applied in Section \ref{applications} to integrability problems of differential equations. In particular, we prove that for a $m$th-order ODE, the knowledge of a $\mathcal{C}^{\infty}$-structure permits to {\it split} the ODE into $m$ completely integrable Pfaffian equations.

The new approach, apart from its theoretical interest, broadens the  class of vector fields that can be involved in the process of integrating differential equations, enlarging the range of problems that can be addressed.  Indeed, as we will show in Section \ref{applications}, a $\mathcal{C}^{\infty}$-structure can be used to integrate ODEs that present some obstacle when are attempted to be  solved by using standard procedures.

The examples in this section
aim to show how \cinf-structures can be constructed and applied to solve this type of problems. The equation studied in Example \ref{example1lie} admits just one Lie point symmetry. The Lie symmetry method fails because  the quadrature that recovers the solutions from the reduced equation (which is of Abel type) cannot be explicitly evaluated. For the equations in examples \ref{exampleord2sinlie} and  \ref{exampleorden3} the Lie symmetry method cannot be applied because the equations lack Lie point symmetries. All of them can be completely integrated by using  the $\mathcal{C}^{\infty}$-structure approach. Finally, we consider in Example \ref{examplesystem} a non-autonomous first-order system of two ordinary differential equations to illustrate how the  method can also be used for systems of ODEs.

\section{Preliminaries}\label{preliminaries}

In this section we introduce some notation and theoretical background of our work.

\subsection{Basic assumptions and notations}

Due to the local character of our results, we will work in an open simply connected 
 subset $U$ of an Euclidean space $\mathbb{R}^n$.  All the results can be extended straightforwardly to the case of an $n$-dimensional manifold. The vector fields, differential forms and functions are assumed to be smooth (meaning $\mathcal{C}^{\infty}$) although this condition could be relaxed for each particular situation.  We also assume that the domains have been restricted when necessary.

By $\mathfrak{X}(U)$ and ${\Omega}^k(U)$ we will denote the modules over $\mathcal{C}^{\infty}(U)$ of all smooth vector fields and $k$-forms, respectively, while $\Omega^*(U)$ will denote the exterior algebra of all the differential forms on $U$ \cite{Morita,warner}.

Given $r$ pointwise linearly independent vector fields $Z_1,\ldots,Z_r \in \mathfrak{X}(U),$ we will denote by
\begin{equation}\label{distribucioncampos}
  \mathcal{Z}:=\mathcal{S}(\{Z_1,\ldots,Z_r\})  
\end{equation}
the submodule of $\mathfrak{X}(U)$ generated by $Z_1,\ldots,Z_r.$ Similarly, the submodule of $\Omega^1(U)$ generated by a set of $r$ pointwise linearly independent 1-forms $\Lambda=\{\omega_1,\ldots,\omega_{r}\}$ will be denoted by $ \mathcal{S}(\Lambda).  $ 

The submodule  $\mathcal{Z}$ (resp.  $ \mathcal{S}(\Lambda) $) defines a distribution (resp. a Pfaffian differential system) of constant rank $r$.  

A $k$-form $\omega$ is said to annihilate $\mathcal{Z}$ if $\omega(Y_1,\ldots,Y_k)=0$ on $U$ whenever $Y_1,\dots,Y_k\in \mathcal{Z}$ \cite[Definition 2.26]{warner}; we denote
\begin{equation}
    \mbox{Ann}(\mathcal{Z})=\{\omega\in \Omega^*(U): \omega \quad \mbox{annihilates}\quad  \mathcal{Z}\}.
\end{equation} The set   $\mbox{Ann}(\mathcal{Z})$ is  an ideal in $\Omega^*(U)$ which  is locally generated by $n-r$ pointwise linearly independent 1-forms $\Lambda:=\{\omega_1,\ldots,\omega_{n-r}\}$  \cite[Proposition 2.28]{warner}. In this case we will write
\begin{equation}\label{anuladorideal}
    \mbox{Ann}(\mathcal{Z})=\mathcal{I}(\Lambda).
\end{equation}
The associated submodule $\mathcal{S}(\Lambda)$ will be denoted by $\mathcal{Z}^{\circ}$ and it can be easily checked that 
\begin{equation}\label{Z*}
    \mathcal{Z}^{\circ}=\mathcal{S}(\Lambda)=\{\omega\in \Omega^1(U): Z\,\lrcorner\,\omega=0, \,\mbox{for each} \,Z \in \mathcal{Z}\},
\end{equation}
where 
$\left. \right\lrcorner$ denotes contraction (interior product). By linearity, it is clear that $\omega\in \mathcal{Z}^{\circ}$ (resp. $Z\in \mathcal{Z}$) if and only if  $Z_k \,\lrcorner\,\omega =0$ for $1\leq k\leq r$ (resp.  $Z \,\lrcorner\, \omega_l =0$ for $1\leq l\leq n-r$).

The distribution  $\mathcal{Z}$ is said to be involutive if it is closed under the Lie bracket.  This involutivity can be characterized in terms of   the algebraic ideal \eqref{anuladorideal}. For the following proposition, let us  recall that an ideal $\mathcal{J}\subset \Omega^*(U)$ is called a differential ideal \cite[Definition 2.29]{warner} if it is is closed under exterior differentiation $d;$ that is 
$d(\mathcal{J})\subset \mathcal{J}.$
\begin{proposition}\label{caracterizacioninvolutivo}\cite[Proposition 2.30]{warner}
A distribution  $\mathcal{Z}$ is involutive if and only if  the ideal $\mbox{Ann}(\mathcal{Z})=\mathcal{I}(\Lambda)$ is a differential ideal. 
\end{proposition}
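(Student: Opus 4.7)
The plan is to reduce the statement to the Cartan formula for the exterior derivative of a 1-form,
\[
d\omega(X,Y) = X(\omega(Y)) - Y(\omega(X)) - \omega([X,Y]),
\]
applied to the generators $\omega_1,\ldots,\omega_{n-r}$ of $\mathcal{I}(\Lambda)$ and to vector fields in $\mathcal{Z}$. Since $\mathcal{Z}^\circ = \mathcal{S}(\Lambda)$, the first two terms on the right vanish whenever $X,Y\in\mathcal{Z}$, so the identity collapses to $d\omega_i(X,Y)=-\omega_i([X,Y])$. This single observation is the engine of both implications.

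For the direct implication, I would assume $\mathcal{Z}$ is involutive and show $d\omega_i \in \mathcal{I}(\Lambda)$ for each generator $\omega_i$. Involutivity gives $[X,Y]\in\mathcal{Z}$ for $X,Y\in\mathcal{Z}$, and hence $\omega_i([X,Y])=0$; by the collapsed identity, $d\omega_i$ annihilates $\mathcal{Z}$ as a 2-form. The main technical step is then showing that a 2-form that annihilates $\mathcal{Z}$ lies in the ideal generated by $\Lambda$. I would complete $\{\omega_1,\ldots,\omega_{n-r}\}$ to a local coframe $\{\omega_1,\ldots,\omega_n\}$, with the last $r$ members dual to a local frame $\{Z_1,\ldots,Z_r\}$ of $\mathcal{Z}$, expand
\[
d\omega_i = \sum_{j<k} c_{jk}\,\omega_j\wedge\omega_k,
\]
and split the sum by whether the indices exceed $n-r$: the terms containing at least one $\omega_\ell$ with $\ell\le n-r$ are automatically in $\mathcal{I}(\Lambda)$, while each remaining coefficient equals $d\omega_i(Z_a,Z_b)$ up to sign and therefore vanishes. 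Hence $d\omega_i\in\mathcal{I}(\Lambda)$, and since the ideal is generated (in the exterior-algebra sense) by the $\omega_i$, differentiating an arbitrary element $\alpha\wedge\omega_i$ produces $d\alpha\wedge\omega_i \pm \alpha\wedge d\omega_i$, which remains in $\mathcal{I}(\Lambda)$.

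For the converse, assume $d(\mathcal{I}(\Lambda))\subset\mathcal{I}(\Lambda)$ and fix $X,Y\in\mathcal{Z}$. For each generator $\omega_i$ we may write $d\omega_i = \sum_j \alpha_j\wedge\omega_j$; evaluating on $(X,Y)$, every term $(\alpha_j\wedge\omega_j)(X,Y) = \alpha_j(X)\omega_j(Y) - \alpha_j(Y)\omega_j(X)$ vanishes because $\omega_j\in\mathcal{Z}^\circ$. Plugging $d\omega_i(X,Y)=0$ back into the Cartan identity yields $\omega_i([X,Y])=0$ for every $i$, so $[X,Y]\in\mathcal{Z}^{\circ\circ}=\mathcal{Z}$ by the last characterization preceding the proposition, and $\mathcal{Z}$ is involutive.

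The only step requiring real work is the assertion that a 2-form annihilating $\mathcal{Z}$ belongs to $\mathcal{I}(\Lambda)$; everything else is a direct application of the Cartan formula. I would present the coframe-completion argument briefly, since it is the standard trick used to identify the algebraic ideal generated by $\Lambda$ with the set of forms annihilating $\mathcal{Z}$.
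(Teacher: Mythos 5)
Your argument is correct: the Cartan identity $d\omega(X,Y)=X(\omega(Y))-Y(\omega(X))-\omega([X,Y])$, the coframe expansion showing $d\omega_i\in\mathcal{I}(\Lambda)$ when $\mathcal{Z}$ is involutive, and the use of the characterization $Z\in\mathcal{Z}$ iff $Z\,\lrcorner\,\omega_l=0$ for all $l$ in the converse are all sound, and checking the differential-ideal condition on the generators $\omega_i$ alone is justified exactly as you say. Note that the paper does not prove this statement at all — it is imported by citation from Warner (Proposition 2.30) — and your proof is essentially the standard argument given there, so there is nothing further to reconcile.
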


A smooth function $F$ on $U$ is a first integral of  $\mathcal{Z}$ if $Z(F)=0$ for each $Z\in \mathcal{Z}$ or, equivalently, if $Z_i(F)=0$ for $1\leq i\leq r.$ Similarly, we say that $F$ is  a first integral of $\mathcal{S}(\Lambda)$ if $dF\in \mathcal{S}(\Lambda)$ or, equivalently, if
\begin{equation}\label{edpintegralprimera}
    dF\wedge\omega_1\wedge\ldots\wedge \omega_{n-r}=0.
\end{equation} 
Therefore, a function $F$ is a first integral of  $\mathcal{Z}$ if and only if   $F$ is a first integral of $\mathcal{Z}^{\circ}.$

\subsection{Distributions and solvable structures}

For an involutive distribution \eqref{distribucioncampos}, Frobenius Theorem establishes the local existence of a complete set of first integrals of $\mathcal{Z},$ i.e., $n-r$ functionally independent first integrals of $\mathcal{Z}.$ The  integral manifolds{\footnote{In this paper integral manifolds of distributions will always be integral manifolds of maximal dimension \cite[Remark 1.58]{warner}.}} of $\mathcal{Z}$   can be implicitly defined by the corresponding level sets. In fact,  Frobenius Theorem  guarantees, for $p\in U,$ the existence of a unique maximal \cite[Definition 1.63]{warner} connected integral manifold of the involutive distribution through $p$ \cite[Theorem 1.64]{warner}. However, Frobenius Theorem 
does not  provide  a procedure to find the integral manifolds.

An important notion concerning the search for integral manifolds of distributions is that of solvable structure, which is based on the following concept of symmetry of a distribution \cite{basarab,hartl1994solvable,sherring1992geometric,Barco2001,Barco2002,Catalano_Paola,lychagin1991}:  

\begin{definition}\label{simetriadedistribucion}
Let $\mathcal{Z}=\mathcal{S}(\{ Z_1,\ldots, Z_r \}) $ be an involutive distribution of vector fields on an open set $U\subset\mathbb{R}^n.$ A 
vector field $X$ on  $U$ is called a symmetry of  $\mathcal{Z}$ if :
\begin{enumerate}
\item $ \{Z_1,\ldots, Z_r, X\} $ is a set of pointwise linearly independent vector fields on $U;$ and
\item there exist smooth functions $c_{ik}$ on $U$ such that
$$
[X,Z_i]=\sum_{k=1}^r c_{ik} Z_k \text{, } \text{ for } i=1, \ldots,r.
$$
\end{enumerate}
\end{definition}

The previous notion of symmetry has been used in \cite{basarab} to define a solvable structure as follows: 

\begin{definition}\label{solvable}
Let $ \mathcal{Z}=\mathcal{S}(\{Z_1,\ldots,Z_r\}) $ be an involutive distribution on an open subset $U$ of     
 $ \mathbb{R}^n$. Let $ \langle X_1,\ldots, X_{n-r}\rangle $ be an ordered collection of vector fields on $U$ and denote 
\begin{equation}\label{XK}
\begin{array}{l}
    \mathcal{X}_0:=\{ Z_1,\ldots,Z_r\}\quad\mbox{and}\quad
    \mathcal{X}_k:=\{ Z_1,\ldots,Z_r, X_1,\ldots,X_k\}, \quad\mbox{for}\quad 1\leq k\leq n-r.
    \end{array}
\end{equation}
We will say that $ \langle X_1,\ldots, X_{n-r}\rangle $ is a solvable structure  for $ \mathcal{Z} $ if  for $1\leq k\leq n-r$ the vector field $ X_k $ is a symmetry of the rank $r+k-1$ distribution $ \mathcal{S}(\mathcal{X}_{k-1}).$ 
\end{definition}

\begin{remark}

\begin{enumerate}
    \item[(a)] In references such as \cite{basarab,hartl1994solvable,sherring1992geometric,Barco2001,Barco2002,Catalano_Paola} the term solvable structure refers to the whole ordered set $ \langle Z_1,\ldots,Z_r, X_1,\ldots, X_{n-r}\rangle,$ while in this work  the term solvable structure for $\mathcal{Z}$ refers only to the part  $\langle X_1,\ldots, X_{n-r}\rangle,$ because the distribution $\mathcal{Z}$ is fixed and given in advance. 
    \item [(b)] Observe that in Definition \ref{solvable}, for $0\leq k \leq n-r,$ the vector fields in $\mathcal{X}_k$ are pointwise linearly independent on $U.$ In particular,  $X_1,\ldots,X_{n-r}$ are also pointwise linearly independent.
\end{enumerate}
\end{remark}

The knowledge of a solvable structure provides a systematic procedure to find the integral manifolds of  the distribution $ \mathcal{Z} $ by   integrating successively certain 1-forms  constructed from  the elements of the solvable structure. At each stage there appears a 1-form that can be integrated by quadrature.    This is the main result on solvable structures concerning the integrability of the distribution (see \cite{basarab,hartl1994solvable,sherring1992geometric,Barco2001,Barco2002,Catalano_Paola} for further details): 

\begin{proposition} \cite[Proposition 3]{basarab}\label{basarabprop3}
Given a solvable structure for an involutive distribution $ \mathcal{Z} $, their integral manifolds can be found, at least locally, by quadratures alone.
\end{proposition}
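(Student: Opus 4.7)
The plan is to use the solvable tower to manufacture, in descending order, a sequence of $n-r$ functionally independent first integrals of $\mathcal{Z}$, each produced by one integrating-factor computation and one quadrature on the simply connected domain $U$.

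First, I would set up the dual coframe. Since $\{Z_1,\ldots,Z_r,X_1,\ldots,X_{n-r}\}$ is a pointwise basis of $\mathfrak{X}(U)$ (part (b) of the remark following Definition \ref{solvable}), there is a unique dual coframe $\{\eta_1,\ldots,\eta_r,\omega_1,\ldots,\omega_{n-r}\}$ determined by $\eta_i(Z_j)=\delta_{ij}$, $\eta_i(X_k)=0$, $\omega_k(Z_j)=0$, $\omega_k(X_l)=\delta_{kl}$. A direct check using \eqref{Z*} gives, for each $0\le k\le n-r-1$,
$$\mbox{Ann}(\mathcal{S}(\mathcal{X}_k))=\mathcal{I}(\{\omega_{k+1},\ldots,\omega_{n-r}\}).$$
An induction on $k$ using $[X_k,Y]\in \mathcal{S}(\mathcal{X}_{k-1})\subset \mathcal{S}(\mathcal{X}_k)$ for every $Y\in\mathcal{X}_{k-1}$, together with the involutivity of $\mathcal{Z}$, shows that each $\mathcal{S}(\mathcal{X}_k)$ is involutive; hence by Proposition \ref{caracterizacioninvolutivo} each ideal $\mathcal{I}(\{\omega_{k+1},\ldots,\omega_{n-r}\})$ is a differential ideal.

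Second, I would process the filtration in descending order $k=n-r,n-r-1,\ldots,1$. At the top level, $\omega_{n-r}$ alone generates the annihilator of the corank-1 involutive distribution $\mathcal{S}(\mathcal{X}_{n-r-1})$, so Frobenius gives $d\omega_{n-r}=\tau_{n-r}\wedge \omega_{n-r}$ for some 1-form $\tau_{n-r}$. Using Cartan's magic formula on $\omega_{n-r}$ together with the bracket relations guaranteed by the symmetry property of $X_{n-r}$ with respect to $\mathcal{S}(\mathcal{X}_{n-r-1})$, one expresses the components of $\tau_{n-r}$ in the coframe $\{\eta_i,\omega_j\}$ explicitly in terms of the known structure functions $c_{ij}$, and shows that $\tau_{n-r}$ is closed modulo $\omega_{n-r}$. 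A primitive $g_{n-r}$ of $\tau_{n-r}\bmod \omega_{n-r}$ is then found by a single line integral on $U$. Setting $\mu_{n-r}=e^{-g_{n-r}}$ yields $d(\mu_{n-r}\omega_{n-r})=0$, and the Poincaré lemma on the simply connected $U$ produces $F_{n-r}$, by quadrature, with $dF_{n-r}=\mu_{n-r}\omega_{n-r}$. Because $Z_i\lrcorner(\mu_{n-r}\omega_{n-r})=0$ for every $i$, $F_{n-r}$ is a first integral of $\mathcal{Z}$.

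Third, I would iterate after descending to a level manifold. On $N_{n-r}=\{F_{n-r}=\mathrm{const}\}$ the restricted collection $\langle X_1,\ldots,X_{n-r-1}\rangle$ is again a solvable structure for the restriction of $\mathcal{Z}$, and $\omega_{n-r-1}$ now plays the role that $\omega_{n-r}$ played above. Applying the same construction on $N_{n-r}$ yields a new first integral $F_{n-r-1}$ by quadrature, functionally independent of $F_{n-r}$. Iterating the procedure $n-r$ times produces a complete system of functionally independent first integrals of $\mathcal{Z}$, whose common level sets are its integral manifolds. The step I expect to be the main obstacle is the second one: verifying in detail that the nested symmetry conditions of a solvable structure, when fed through Cartan's formula, actually force the 1-form $\tau_k$ to be closed modulo the already-integrated $\omega_j$'s at every stage, so that each integrating factor $\mu_k$ truly reduces to a quadrature rather than to an honest PDE. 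This is precisely the structural point that distinguishes solvable structures (closed 1-forms at each step) from the more general $\mathcal{C}^\infty$-structures introduced later in the paper (where only complete integrability of the Pfaffian equation is retained).
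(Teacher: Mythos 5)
The paper itself gives no proof of Proposition \ref{basarabprop3} (it is quoted from \cite{basarab}); the relevant internal material is Remark \ref{weakercondition}(c), which records the dual characterization of a solvable structure, and the proof of Theorem \ref{maintheorem}, which generalizes the statement. Your overall skeleton --- build the dual coframe, check that each $\mathcal{S}(\mathcal{X}_k)$ is involutive and that $\mathrm{Ann}(\mathcal{S}(\mathcal{X}_k))=\mathcal{I}(\{\omega_{k+1},\ldots,\omega_{n-r}\})$, integrate the last form, restrict everything to a level set and iterate --- is exactly the standard scheme and coincides with the descent used in the proof of Theorem \ref{maintheorem}. The problem is that the one step carrying the actual content of the proposition (that each step is a \emph{quadrature}, not merely the integration of a completely integrable Pfaffian equation) is left unproven, and the route you sketch for it would not work as stated. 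From $d\omega_{n-r}=\tau_{n-r}\wedge\omega_{n-r}$ alone, producing an integrating factor is \emph{not} a quadrature: ``$\tau_{n-r}$ closed modulo $\omega_{n-r}$'' does not let you obtain $g_{n-r}$ with $dg_{n-r}\equiv\tau_{n-r}$ mod $\omega_{n-r}$ by a line integral; you would have to solve $dg=\tau_{n-r}+h\,\omega_{n-r}$ for unknown $g$ and $h$, which is precisely the PDE problem you were trying to avoid. Indeed, if Frobenius integrability alone sufficed, there would be no difference between solvable structures and the \cinf-structures of this paper; the gap between \eqref{diferenciales} and \eqref{diferencialessolvable} is exactly the point at issue.

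What rescues the argument --- and what you should verify instead --- is that the solvable-structure hypothesis gives strictly more than Frobenius integrability for your dual coframe: the last form is \emph{closed}. Using $d\omega_{n-r}(Y,W)=Y\bigl(\omega_{n-r}(W)\bigr)-W\bigl(\omega_{n-r}(Y)\bigr)-\omega_{n-r}\bigl([Y,W]\bigr)$ on frame fields, one sees that for $Y,W\in\mathcal{X}_{n-r-1}$ all three terms vanish because $\mathcal{S}(\mathcal{X}_{n-r-1})$ is involutive and annihilated by $\omega_{n-r}$, while for $Y\in\mathcal{X}_{n-r-1}$, $W=X_{n-r}$ one gets $-\omega_{n-r}([Y,X_{n-r}])=0$ because $X_{n-r}$ is a symmetry of $\mathcal{S}(\mathcal{X}_{n-r-1})$, i.e.\ $[X_{n-r},Y]\in\mathcal{S}(\mathcal{X}_{n-r-1})$ for every generator $Y$. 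Hence $d\omega_{n-r}=0$ identically (and, running the same computation at lower levels, $d\omega_i=\sum_{j>i}\sigma_{ij}\wedge\omega_j$, which is \eqref{diferencialessolvable}), so on the simply connected $U$ the form $\omega_{n-r}$ is exact and $F_{n-r}$ is obtained by a genuine quadrature, with no integrating factor at all. After pulling back to the level set $\{F_{n-r}=\mathrm{const}\}$ --- where your tangency check $Z_i(F_{n-r})=X_j(F_{n-r})=0$ is correct and mirrors the imbedding argument in the proof of Theorem \ref{maintheorem} --- the restricted $\bar\omega_{n-r-1}$ is again closed by the same identity, and the induction proceeds with one quadrature per step. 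As written, your proposal only establishes complete integrability of each Pfaffian equation, i.e.\ the weaker conclusion of Theorem \ref{maintheorem}, not integrability by quadratures alone.
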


\begin{remark}\label{remarkPfaffian}
Here the term quadrature refers to the problem of finding a smooth function $F$ such that locally $dF=\omega$, for a previously given exact 1-form $\omega$. Such function $F$ is called a primitive  of  $\omega.$ Since \eqref{edpintegralprimera} is satisfied, $F$ is a first integral  of  $\mathcal{S}(\{\omega\}).$  The level sets of $F$ are integral manifolds of the distribution annihilated by $\mathcal{S}(\{\omega\}).$ 

For further reference, let us recall that if $\omega$ is not exact but at least $\omega \wedge d\omega=0$ then $\mathcal{I}(\{\omega\})$ is a differential ideal and the distribution associated to the Pfaffian system $\mathcal{S}(\{\omega\})$ is involutive. By Frobenius Theorem there also exist (at least locally) integral manifolds described  by the level sets of a smooth function $F$. Therefore there also exists  a non-vanishing smooth function $\mu$, such that
$dF=\mu \omega$.
In this case it is said that $\omega \equiv 0
$ is a \emph{completely integrable Pfaffian equation} or that the 1-form $\omega$ is \emph{Frobenius integrable}. 
The function $\mu$ is called integrating factor of $\omega$ and $F$ is a first integral of $\mathcal{S}(\{\omega\})$. 
\end{remark}

\subsection{Symmetries and solvable structures for ODEs }\label{symmetries}

In this section we adapt the former concepts and results to the framework of the integrability of a given $m$th-order
ODE 
\begin{equation}\label{ODE1}
u_m=\phi(x, u,u_1,\ldots,u_{m-1}),
\end{equation}
where $(x,u)$ are in some open set $M\subset \mathbb{R}^2$ and represent the independent and dependent variables, respectively, and $u_i$  denotes the derivative of $u$ with respect to $x$ of order $i$, for $i=1,\dots,m.$  The corresponding $m$th-order jet space \cite{saunders1989geometry} is denoted by $J^m(\mathbb{R},\mathbb{R})$ and $\phi$ is a smooth function defined on the open subset $M^{(m-1)}\subseteq J^{m-1}(\mathbb{R},\mathbb{R}).$

Equation \eqref{ODE1} determines the rank 1 distribution $\mathcal{S}(\{A\})$ defined by the vector field
\begin{equation}\label{campoA}
A=\partial_x+ u_1 \partial_u+\ldots+ \phi \partial_{u_{m-1}}\in \mathfrak{X}(M^{(m-1)}).
\end{equation}

The distribution $\mathcal{S}(\{A\})$ is trivially involutive and hence integrable. Its integral manifolds (curves) correspond to solutions of  (\ref{ODE1}) prolonged to the $(m-1)$th-order jet space.

The Definition \ref{simetriadedistribucion} of symmetry of a distribution, when applied to $\mathcal{S}(\{A\})$, includes well-known notions of symmetry of an ODE. Given a  Lie point symmetry of (\ref{ODE1})
\begin{equation}\label{pointsymmetry}
\mathbf{v}=\xi(x,u) \partial_x +\eta (x,u) \partial_u,
\end{equation}
let  $X:=\mathbf{v}^{(m-1)}$ denote the standard $(m-1)$th-order prolongation of $\mathbf{v}$ \cite{olver86}. The vector field $X$ satisfies the relation \cite{stephani}:
\begin{equation}\label{bracketsymmetry}
[X,A]= \rho A,  
\end{equation}
where $\rho=-A(\xi).$ Therefore, according to Definition \ref{simetriadedistribucion}, $X$ is a symmetry of the distribution $\mathcal{S}(\{A\}).$  When the functions $\xi$ and $\eta$ in \eqref{pointsymmetry} also depend on 
derivatives of $u$ we obtain \emph{generalized} symmetries  \cite{olver86} (also called \emph{dynamical} symmetries  \cite{stephani} or \emph{higher-order} symmetries \cite{blumanlibro}). They satisfy the same condition (\ref{bracketsymmetry})  \cite[Chapter 12]{stephani} and are therefore symmetries in the sense of distributions. 

In this context, a solvable structure for $\mathcal{S}(\{A\})$ consists of an ordered collection of vector fields $\langle X_1,\ldots, X_{m} \rangle$ such that:
\begin{enumerate}
    \item[(i)] $X_1$  is a generalized symmetry of (\ref{ODE1}). 
        \item[(ii)] For $2\leq k \leq m$ the vector field $X_k$ is a symmetry, in the sense of Definition \ref{simetriadedistribucion}, of the distribution $\mathcal{S}(\{A,X_1,\ldots,X_{k-1}\})$.
\end{enumerate}

It must be observed that in general the vector fields  $X_k$ for $2\leq k\leq m$ are not necessarily generalized symmetries of the equation. On the other hand, the $m$ vector fields of a solvable structure do not necessarily constitute a solvable Lie algebra. However, a $m$-dimensional solvable algebra of symmetries provides a particular case of solvable structure \cite{basarab,hartl1994solvable,sherring1992geometric,Barco2001,Barco2002}.  

As stated in Proposition \ref{basarabprop3}, the knowledge of a solvable structure is sufficient to integrate the equation by quadratures. The reader is referred to \cite{basarab,hartl1994solvable,sherring1992geometric,Barco2001,Barco2002,Catalano_Paola}
for more details on the procedure.

\section{$\mathcal{C}^{\infty}$-structure for a distribution}\label{mainresults}
 
Although most of the classical procedures to integrate a given ODE are consequence of the existence of  Lie point symmetries, there are many ODEs without Lie point symmetries that can be integrated. This motivated the search of a  wider class of vector fields that could explain such integration procedures. In this context,  $\mathcal{C}^{\infty}$-symmetries (or $\lambda$-symmetries) \cite{muriel01ima1} and its posterior generalizations \cite{ muriel2017IMA,gaetamorando,murielolver,cicogna2012generalization,muriel_evolution,gaetatwisted,gaeta2,zhang2019relationship}
emerged as  extensions of the Lie point symmetry concept.  
The notion of  $\mathcal{C}^{\infty}$-symmetry is based on a different way of prolonging a given vector field \eqref{pointsymmetry}, which uses a smooth function $\lambda.$ If a pair $(\mathbf{v},\lambda)$ is a $\mathcal{C}^{\infty}$-symmetry of the equation \eqref{ODE1} then the  $(m-1)$th-order $\lambda$-prolongation of $\mathbf{v}$ denoted by $X:=\mathbf{v}^{[\lambda,(m-1)]}$ \cite{muriel01ima1}, satisfies
\begin{equation}\label{bracketlambdasym}
[X,A]=\lambda X+\rho A,
\end{equation} which provides a  weaker condition than  condition (\ref{bracketsymmetry}).

When  instead of vector fields of the form \eqref{pointsymmetry}, generalized vector fields are considered, the concept of $\mathcal{C}^{\infty}$-symmetry can be extended in the following straightforward way: 
\begin{definition}\label{general_cinf-sym}
A vector field $X\in \mathfrak{X}(M^{(m-1)})$ will be called a generalized $\mathcal{C}^{\infty}$-symmetry of equation (\ref{ODE1}) if there exist $\lambda,\rho \in \mathcal{C}^{\infty}(M^{(m-1)})$ such that
$$
[X,A]=\lambda X+\rho A.
$$
\end{definition}

Our next goal is to use a similar idea in order to introduce the notion of $\mathcal{C}^{\infty}$-symmetry of a distribution.

\begin{definition}\label{Csymmdistribution}
Let $\mathcal{Z}=\mathcal{S}(\{ Z_1,\ldots, Z_r \})$ be an involutive distribution  on $U.$ A vector field $X$ on $U$ is called a $\mathcal{C}^{\infty} $-symmetry of $\mathcal{Z}$ if:
\begin{enumerate}
\item $ \{Z_1,\ldots, Z_r, X \}$ is a set of pointwise linearly independent vector fields on $U;$ and 
\item the distribution $\mathcal{S}(\{Z_1,\ldots, Z_r, X \})$ is involutive, i.e., there exist smooth functions $\lambda_i, c_{ik}$ on $U$ ($ i,k=1,\ldots,r$) such that
\begin{equation}\label{deficinfsym}
[X,Z_i]=\lambda_i X+\sum_{k=1}^r c_{ik} Z_k \text{, } i=1, \ldots,r.
\end{equation}

\end{enumerate}
\end{definition}

\begin{remark}\label{remark_cinfsym}
\begin{enumerate}
    \item[(a)]
The notion of symmetry of a distribution in Definition \ref{simetriadedistribucion} is a particular case of the concept given in Definition \ref{Csymmdistribution}. It corresponds to the particular case when 
    $\lambda_i=0$ for  $i=1,\ldots,r.$ 
      \item[(b)] The notion of $\lambda$-symmetry of a distribution in \cite[Definition 2]{zhang2019relationship} is a particular case of the concept given in Definition \ref{Csymmdistribution}. It corresponds to the particular case when
    $\lambda_i=\lambda_j$ for $i,j=1,\ldots,r.$
    \item[(c)] A $ \mathcal{C}^{\infty} $-symmetry of the distribution $\mathcal{Z}=\mathcal{S}(\{A\})$ generated by the vector field \eqref{campoA} associated to equation \eqref{ODE1} is a generalized $\mathcal{C}^{\infty}$-symmetry of \eqref{ODE1}  in the sense of Definition \ref{general_cinf-sym}.
    \item[(d)] If $\mathcal{Z}=\mathcal{S}(\{Z_1,\ldots,Z_{n-1}\})$ is an involutive distribution of corank 1 then $X$ is a $ \mathcal{C}^{\infty} $-symmetry of $\mathcal{Z}$ if and only if $\{Z_1,\ldots,Z_{n-1},X\}$ is a set of pointwise linearly independent vector fields. 
\end{enumerate}
\end{remark}

The following concept extends the notion of solvable structure (Definition \ref{solvable}) when we use  $\mathcal{C}^{\infty}$-symmetries instead of symmetries:

\begin{definition}\label{Qsolvable}
Let $ \mathcal{Z}=\mathcal{S}(\{Z_1,\ldots,Z_r\}) $ be an involutive distribution on an open subset $U$ of $\mathbb{R}^n$. Let $ \langle X_1,\ldots, X_{n-r}\rangle $ be an ordered collection of vector fields on $U$ and  define the sets $ \mathcal{X}_k,$ for $0\leq k\leq n-r,$   as in \eqref{XK}.
We will say that $ \langle X_1,\ldots, X_{n-r}\rangle $ is a structure of \cinf-symmetries for $\mathcal{Z}$, or simply a \cinf-structure for $\mathcal{Z}$, if for $1\leq k\leq n-r$ the vector field $X_k$ is a $ \mathcal{C}^{\infty}$-symmetry of the rank $r+k-1$ distribution $ \mathcal{S}(\mathcal{X}_{k-1}).$
\end{definition}

\begin{remark}\label{involutive}
\begin{enumerate}
    \item[(a)] 
    Condition 1 in Definition \ref{Csymmdistribution} implies that for $0\leq k\leq n-r$ the sets $\mathcal{X}_k$ appearing in Definition \ref{Qsolvable} (and in particular   $\{X_1,\ldots,X_{n-r}\}$) are  sets of  pointwise linearly independent vector fields on $U.$ 
    \item[(b)] In Definition \ref{Qsolvable} the distributions $\mathcal{S}(\mathcal{X}_k)$ are  necessarily involutive for $k=0,\ldots,n-r.$ In fact,  $ \langle X_1,\ldots, X_{n-r}\rangle $ is a $\mathcal{C}^{\infty}$-structure  for $ \mathcal{Z}$ if and only if  for $1\leq k\leq n-r$ the distribution $\mathcal{S}(\mathcal{X}_k)$ is involutive and  $\mathcal{X}_{n-r}$ is a set of pointwise linearly independent vector fields on $U.$
    \item[(c)] A solvable structure is a particular case of a $\mathcal{C}^{\infty}$-structure,  in which all the vector fields $X_i$ are  symmetries of the corresponding distribution. 
    \item[(d)] The last element of a \cinf-structure can be any vector field $X_{n-r}$ such that $\mathcal{X}_{n-r}$ is  a set of pointwise linearly independent vector fields on $U$ (see Remark \ref{remark_cinfsym} (d)).
\end{enumerate}
\end{remark}

By using Proposition \ref{caracterizacioninvolutivo}, the conditions on the involutivity of $\mathcal{S}(\mathcal{X}_k),$ for $0\leq k\leq n-r,$ (see Remark \ref{involutive} (b)) in Definition \ref{Qsolvable}  can also be expressed in terms of the associated Pfaffian system $\mathcal{S}(\mathcal{X}_k)^{\circ}.$ This motivates the following definition:  
\begin{definition}\label{Qsolvabledual}
Let $ \mathcal{Z}=\mathcal{S}(\{Z_1,\ldots,Z_r\}) $ be an involutive distribution on an open set $U\subset\mathbb{R}^n$. Let $\langle \omega_1,\ldots, \omega_{n-r}\rangle$ be an ordered collection of pointwise linearly independent 1-forms on $U$ and define the sets
\begin{equation}\label{wk}
    \Lambda_k:=\{ \omega_{k+1},\ldots, \omega_{n-r}\}\quad\mbox{for}\quad 0\leq k\leq n-r-1.
\end{equation} 
We will say that $\langle \omega_1,\ldots, \omega_{n-r}\rangle$ is a $\mathcal{C}^{\infty}$-structure of 1-forms for $ \mathcal{Z} $ if the following two conditions hold:
\begin{enumerate}
\item[(a)] $ \mathcal{Z}^{\circ}=\mathcal{S}\left(\Lambda_0\right)$.
\item[(b)]\label{segundacondicion} $ \mathcal{I}(\Lambda_k) $ is a differential ideal for $k=1,\ldots, n-r-1$.
\end{enumerate}
\end{definition}

The following proposition states that there exists a correspondence between the objects introduced in  Definition \ref{Qsolvable} and Definition \ref{Qsolvabledual}:

\begin{proposition}\label{qsolv_struct_dual}
Let $\mathcal{Z}=\mathcal{S}(\{Z_1,\ldots,Z_r\})$ be an involutive distribution on an open subset $U$ of $\mathbb{R}^n$. 
Given a \cinf-structure $\langle X_1,\ldots, X_{n-r}\rangle$  for $\mathcal{Z}$, there exists a \cinf-structure of 1-forms $\langle \omega_1,\ldots, \omega_{n-r}\rangle$ such that
\begin{equation}\label{eq_condiciondual}
    \mathcal{S}\left( \mathcal{X}_{k} \right)^{\circ}=\mathcal{S}\left( \Lambda_{k} \right),\quad\mbox{for}\quad 1\leq k\leq n-r-1  
\end{equation}
and conversely.  When conditions \eqref{eq_condiciondual} are satisfied we will say that $\langle X_1,\ldots, X_{n-r}\rangle$  and   $\langle \omega_1,\ldots, \omega_{n-r}\rangle$ are $\mathcal{Z}$-related. 
\end{proposition}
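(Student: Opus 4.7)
The plan is to use Proposition \ref{caracterizacioninvolutivo} as the bridge: on the vector field side, the defining property of a \cinf-structure is (via Remark \ref{involutive}(b)) the involutivity of every $\mathcal{S}(\mathcal{X}_k)$; on the 1-form side, condition (b) of Definition \ref{Qsolvabledual} is precisely the involutivity of the distributions annihilated by the $\mathcal{S}(\Lambda_k)$. Under the assignment $\mathcal{Y}\mapsto \mathcal{Y}^{\circ}$, an ascending chain of free submodules of $\mathfrak{X}(U)$ corresponds to a descending chain of free submodules of $\Omega^1(U)$, with ranks $r, r+1,\dots,n-1$ on one side and $n-r, n-r-1,\dots,1$ on the other. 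The proposition then reduces to choosing the generators compatibly with these chains.

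For the forward implication, start from $\langle X_1,\dots,X_{n-r}\rangle$ and consider the chain of annihilators
\begin{equation*}
    \mathcal{Z}^{\circ}=\mathcal{S}(\mathcal{X}_0)^{\circ}\supsetneq \mathcal{S}(\mathcal{X}_1)^{\circ}\supsetneq \cdots \supsetneq \mathcal{S}(\mathcal{X}_{n-r-1})^{\circ}\supsetneq \mathcal{S}(\mathcal{X}_{n-r})^{\circ}=\{0\}.
\end{equation*}
Build the $\omega_i$ \emph{in reverse order}: pick any nonzero generator $\omega_{n-r}$ of the rank-one module $\mathcal{S}(\mathcal{X}_{n-r-1})^{\circ}$; inductively, having chosen $\omega_{k+1},\dots,\omega_{n-r}$ as a pointwise linearly independent basis of $\mathcal{S}(\mathcal{X}_k)^{\circ}$, extend it by one 1-form $\omega_k\in \mathcal{S}(\mathcal{X}_{k-1})^{\circ}\setminus \mathcal{S}(\mathcal{X}_k)^{\circ}$ to obtain a basis of $\mathcal{S}(\mathcal{X}_{k-1})^{\circ}$. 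By construction $\Lambda_k=\{\omega_{k+1},\dots,\omega_{n-r}\}$ generates $\mathcal{S}(\mathcal{X}_k)^{\circ}$, which gives both equation \eqref{eq_condiciondual} and condition (a) of Definition \ref{Qsolvabledual} at $k=0$. For condition (b), note $\mathcal{I}(\Lambda_k)=\mathrm{Ann}(\mathcal{S}(\mathcal{X}_k))$, and $\mathcal{S}(\mathcal{X}_k)$ is involutive by Remark \ref{involutive}(b), so Proposition \ref{caracterizacioninvolutivo} yields that $\mathcal{I}(\Lambda_k)$ is a differential ideal.

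For the converse, start from a \cinf-structure of 1-forms $\langle\omega_1,\dots,\omega_{n-r}\rangle$. Let $\mathcal{D}_k\subset\mathfrak{X}(U)$ be the distribution annihilated by $\mathcal{S}(\Lambda_k)$, of rank $r+k$; condition (a) gives $\mathcal{D}_0=\mathcal{Z}$, and condition (b) together with Proposition \ref{caracterizacioninvolutivo} makes $\mathcal{D}_k$ involutive for $1\le k\le n-r-1$ (the case $k=n-r$ being trivial, since $\mathcal{D}_{n-r}=\mathfrak{X}(U)$). Now build $X_1,\dots,X_{n-r}$ \emph{in forward order}: assuming $\mathcal{X}_{k-1}$ spans $\mathcal{D}_{k-1}$, choose $X_k\in \mathcal{D}_k\setminus \mathcal{D}_{k-1}$ to be pointwise linearly independent from $\mathcal{X}_{k-1}$; then $\mathcal{X}_k$ has rank $r+k$ and spans the involutive $\mathcal{D}_k$. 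Consequently $\mathcal{S}(\mathcal{X}_k)^{\circ}=\mathcal{D}_k^{\circ}=\mathcal{S}(\Lambda_k)$, and $X_k$ is a $\mathcal{C}^{\infty}$-symmetry of $\mathcal{S}(\mathcal{X}_{k-1})$ by Remark \ref{involutive}(b) (applied in the equivalent form stated there).

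The only delicate point is that the extensions in both directions must be \emph{pointwise} linearly independent on all of $U$, not merely linearly independent over $\mathcal{C}^{\infty}(U)$. This is handled by the standing assumption that the submodules in play are locally free of constant rank: the quotient bundles $\mathcal{S}(\mathcal{X}_{k-1})^{\circ}/\mathcal{S}(\mathcal{X}_k)^{\circ}$ (resp.\ $\mathcal{D}_k/\mathcal{D}_{k-1}$) are line bundles on the simply connected $U$, so a pointwise non-vanishing section exists and lifts to the desired generator. Beyond this, the argument is essentially a dualization and bookkeeping exercise; no new analytic input is required.
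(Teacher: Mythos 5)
Your proof is correct and follows essentially the same route as the paper: involutivity of the flag of distributions $\mathcal{S}(\mathcal{X}_k)$ is exchanged for the differential-ideal condition on $\mathcal{I}(\Lambda_k)$ via Proposition \ref{caracterizacioninvolutivo}, and the generators on the other side are obtained by dualizing the two flags. The only difference is organizational: the paper chooses the $\omega_i$ (resp.\ the $X_i$) in one stroke as the last $n-r$ elements of the coframe dual to $\mathcal{X}_{n-r}$ (resp.\ of the frame dual to a completion of $\Lambda_0$ to a coframe), which gives global pointwise linear independence immediately and makes your step-by-step extension and quotient-line-bundle argument unnecessary.
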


\begin{proof}
Given a \cinf-structure $\langle X_1,\ldots, X_{n-r}\rangle$ for $\mathcal{Z}$, consider the ordered set $\langle \omega_1,\ldots, \omega_{n-r}\rangle$ of the last $n-r$ elements of the dual frame 
of $\mathcal{X}_{n-r}$ and define the corresponding sets given in \eqref{wk}. By construction, the elements in $\Lambda_0$ are pointwise linearly independent and, by duality, 
$\mathcal{Z}^{\circ}=\mathcal{S}(\Lambda_0)$ and conditions \eqref{eq_condiciondual} hold. Since for $1\leq k\leq n-r-1$ the distribution $\mathcal{S}\left( \mathcal{X}_{k} \right)$ is involutive, then  
$\mathcal{I}(\Lambda_k)$ is a differential ideal (see Proposition \ref{caracterizacioninvolutivo}). This proves that $\langle \omega_1,\ldots, \omega_{n-r}\rangle$ is a \cinf-structure of 1-forms for $\mathcal{Z}$ satisfying conditions \eqref{eq_condiciondual}.

Conversely, given a \cinf-structure of 1-forms $\langle \omega_1,\ldots, \omega_{n-r}\rangle$ for  $\mathcal{Z},$ let $\{\zeta_1,\ldots,\zeta_r\}$ be a set of 1-forms that completes $\Lambda_0$ to a coframe  $\theta:=\{\zeta_1,\ldots,\zeta_r, \omega_1,\ldots, \omega_{n-r}\}$ on $U.$ Consider the ordered set $\langle X_1,\ldots, X_{n-r}\rangle$ of the last $n-r$ elements of the dual basis of $\theta$ and define the corresponding sets given in \eqref{XK}. By construction 
\begin{equation}\label{eq_vectoresduales}
X_i \,\lrcorner\,\omega_j= 0    \text{ on } U, \text{ for } 1\leq i,j \leq n-r, \, i \neq j.
\end{equation}
 Since $\mathcal{Z}^{\circ}=\mathcal{S}(\Lambda_0),$ it follows from \eqref{eq_vectoresduales} that conditions \eqref{eq_condiciondual} hold.
 
In order to prove that $\mathcal{X}_{n-r}$ is a set of pointwise linearly independent vector fields on $U,$ observe that if there exists $p\in U$ such that
 \begin{equation}
     \alpha_1({Z_1})_{p}+\ldots+\alpha_r({Z_r})_{p}+\alpha_{r+1} ({X_1})_p+\ldots+\alpha_{n} ({X_{n-r}})_p=0
 \end{equation}
for some constants $\alpha_i\in \mathbb{R},$ then after  contraction  with $({\omega_i})_p,$ 
we obtain that $\alpha_{r+i}=0,$ for $1\leq i\leq n-r.$ Since $Z_1,\ldots, Z_r$ are pointwise linearly independent vector fields on $U,$ this implies that also $\alpha_{i}=0,$ for $1\leq i\leq r.$ 
 
 Finally, the involutivity of the distributions $\mathcal{S}\left( \mathcal{X}_{k} \right)$ follows from Proposition \ref{caracterizacioninvolutivo}.  
\end{proof}

\begin{remark}\label{weakercondition}
 \begin{enumerate}

\item[(a)]   Observe that the elements of two $\mathcal{Z}$-related \cinf-structures do not necessarily satisfy $X_i\,\lrcorner\,\omega_j=\delta_{ij},$ for $1\leq i,j\leq n-r.$ Conditions \eqref{eq_condiciondual} only imply that $X_i\,\lrcorner\,\omega_j=0$ for $1\leq i<j\leq n-r.$  Besides,  since  we have that $\omega_{i}\in\mathcal{S}(\mathcal{X}_{i-1})^{\circ}$ and $\omega_{i}\notin\mathcal{S}(\mathcal{X}_{i})^{\circ},$ then it can be checked that, for $1\leq i\leq n-r,$ the function $X_{i}\,\lrcorner\,\omega_{i}$ does not vanish on $U.$ 

\item[(b)]   If $\langle X_1,\ldots,X_{n-r}\rangle$ and $\langle \omega_1,\ldots,\omega_{n-r}\rangle$ are 
$\mathcal{Z}$-related and  $h$ is any non-vanishing function on $U,$ then it can be checked that  $\langle X_1,\ldots,hX_{i_0},\ldots,X_{n-r}\rangle$ and $\langle \omega_1,\ldots,\omega_{n-r}\rangle$ are also 
$\mathcal{Z}$-related.
 As a consequence, since the functions 
 $h_{i}=1/(X_{{i}}\,\lrcorner\,\omega_{{i}})$ for $1\leq i\leq n-r$ are well defined on $U,$ then we have that $\langle h_1X_1,\ldots,h_{n-r}X_{n-r}\rangle$ and $\langle \omega_1,\ldots,\omega_{n-r}\rangle$ are  
 $\mathcal{Z}$-related and satisfy the conditions $(h_{i}X_{i})\,\lrcorner\,\omega_{i}=1$ for $1\leq i\leq n-r.$  
 
   \item[(c)] In the context of Definition \ref{Qsolvabledual}, the condition $d(\mathcal{I}(\Lambda_k)) \subseteq \mathcal{I}(\Lambda_{k})$  is satisfied for $0\leq k\leq  n-r.$ 
    More explicitly,  for $1\leq i \leq n-r,$ there exist 1-forms $\sigma_i$ such that
\begin{equation}\label{diferenciales}
    d\omega_i=\sigma_i\wedge \omega_i+\sum_{j=i+1}^{n-r} \sigma_{ij}\wedge \omega_j, 
\end{equation} 
for some 1-forms $\sigma_{ij},$ $j=i+1,\ldots, n-r.$

The notion of solvable structure (Definition \ref{solvable}) can also be formulated in terms of differential forms \cite{basarab,hartl1994solvable}. In this case the corresponding conditions become   $d(\mathcal{I}(\Lambda_k)) \subseteq \mathcal{I}(\Lambda_{k+1})$ for $k=0,\ldots, n-r-2,$ and $d(\mathcal{I}(\Lambda_{n-r-1}))=0,$ which are more  restrictive than in the case of $\mathcal{C}^{\infty}$-structures. In other words, for a solvable structure, conditions (\ref{diferenciales}) become 
\begin{equation}\label{diferencialessolvable}
    d\omega_i=\sum_{j=i+1}^{n-r} \sigma_{ij}\wedge \omega_j. 
\end{equation}

This discussion explains, in terms of 1-forms, why  $\mathcal{C}^{\infty}$-structures require weaker conditions than solvable structures to be determined.
\end{enumerate}
\end{remark}

The following theorem states the main result on  the application of $\mathcal{C}^{\infty}$-structures to the integration of involutive distributions: 

\begin{theorem}\label{maintheorem}
Let $\mathcal{Z}=\mathcal{S}(\{Z_1,\ldots,Z_r\}) $ be  an involutive distribution of vector fields on $U\subset \mathbb{R}^n.$ Any  $\mathcal{C}^{\infty}$-structure 
for $ \mathcal{Z} $ can be used to find the integral manifolds of $ \mathcal{Z} $ by solving  successively $n-r $ completely integrable Pfaffian equations.
\end{theorem}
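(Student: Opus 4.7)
The plan is a descending inductive construction of $n-r$ functionally independent first integrals $F_{n-r}, F_{n-r-1},\ldots,F_1$ of $\mathcal{Z}$, each obtained by solving a completely integrable Pfaffian equation; their common level sets then recover the integral manifolds of $\mathcal{Z}$. First I would invoke Proposition~\ref{qsolv_struct_dual} to pass to a $\mathcal{Z}$-related \cinf-structure of 1-forms $\langle\omega_1,\ldots,\omega_{n-r}\rangle$, and rely on the structural equations from Remark~\ref{weakercondition}(c), namely $d\omega_i = \sigma_i\wedge\omega_i + \sum_{j>i}\sigma_{ij}\wedge\omega_j$ for $1\leq i\leq n-r$.

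For the base case $k=n-r$, the differential-ideal property of $\mathcal{I}(\Lambda_{n-r-1})=\mathcal{I}(\{\omega_{n-r}\})$ gives $\omega_{n-r}\wedge d\omega_{n-r}=0$. By Remark~\ref{remarkPfaffian}, the Pfaffian equation $\omega_{n-r}\equiv 0$ is completely integrable and yields a first integral $F_{n-r}$ with a non-vanishing integrating factor $\mu_{n-r}$ satisfying $dF_{n-r}=\mu_{n-r}\,\omega_{n-r}$.

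For the inductive step, assume $F_{k+1},\ldots,F_{n-r}$ have been constructed so that $\{dF_{k+1},\ldots,dF_{n-r}\}$ and $\Lambda_k$ generate the same Pfaffian system on $U$; in particular $\omega_j=\sum_{l>k}\beta_{jl}\,dF_l$ for $j>k$. Substituting into the structural equation for $\omega_k$ produces $d\omega_k = \sigma_k\wedge\omega_k + \sum_{l>k}\widetilde{\sigma}_{kl}\wedge dF_l$. On every leaf $L$ of the foliation $\{F_l=c_l\}_{l>k}$ this reduces to $d(\omega_k|_L) = \sigma_k|_L \wedge \omega_k|_L$, so $\omega_k|_L\wedge d(\omega_k|_L)=0$ and $\omega_k|_L$ is Frobenius integrable on $L$. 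Selecting the leafwise integrating factors smoothly in the transverse parameters $c_l$ produces a non-vanishing $\mu_k\in\mathcal{C}^{\infty}(U)$ and a primitive $F_k\in\mathcal{C}^{\infty}(U)$ with $dF_k = \mu_k\omega_k - \sum_{l>k}\gamma_{kl}\,dF_l$ for suitable $\gamma_{kl}$. Equivalently, the 1-form $\mu_k\omega_k - \sum_{l>k}\gamma_{kl}\,dF_l$ is exact, and $F_k$ is recovered from the $(n-r-k+1)$th completely integrable Pfaffian equation of the scheme. Functional independence of $F_k$ from $F_{k+1},\ldots,F_{n-r}$ follows because Remark~\ref{weakercondition}(a) gives $X_k\,\lrcorner\,\omega_k\neq 0$, which forces $dF_k\notin\mathcal{S}(\Lambda_k)$; after $n-r$ iterations, the common level sets of $F_1,\ldots,F_{n-r}$ give the integral manifolds of $\mathcal{Z}$.

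The main obstacle I expect is the assemblage step of the induction: justifying that the leafwise integrating factors and primitives can be chosen as smooth objects on the full open set $U$, so that the next Pfaffian equation genuinely lives on $U$ rather than on individual leaves. This should follow by applying Remark~\ref{remarkPfaffian} directly to the Pfaffian system $\mathcal{S}(\{\omega_k, dF_{k+1},\ldots,dF_{n-r}\})$, whose complete integrability is guaranteed by $\mathcal{I}(\Lambda_{k-1})$ being a differential ideal, and then extracting $F_k$ by a single quadrature. Making this rigorous in parallel with the closed-form argument of \cite{basarab} is where one must be careful about smooth transverse dependence of integrating factors, since the weaker conditions on $\omega_k$ (Frobenius integrability rather than closedness modulo $\Lambda_k$) are precisely what distinguishes \cinf-structures from solvable structures.
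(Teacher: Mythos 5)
Your proposal is correct in substance and is built on the same mechanism as the paper's proof: pass to a $\mathcal{Z}$-related \cinf-structure of 1-forms via Proposition~\ref{qsolv_struct_dual}, use the structural equations \eqref{diferenciales}, and note that on the level sets where the forms $\omega_j$, $j>k$, have already been integrated the terms $\sigma_{kj}\wedge\omega_j$ vanish, leaving $\omega_k$ Frobenius integrable there. The difference is organizational. The paper constructs no ambient first integrals: at each stage it fixes one integral manifold through a chosen point, pulls everything back by an imbedding, and verifies that the pulled-back forms constitute again a \cinf-structure of 1-forms for the restricted distribution on a space of one dimension less, so the next Pfaffian equation lives on that smaller space and the issue of smooth dependence on transverse constants never arises. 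You instead keep everything on $U$ and build a chain of first integrals $F_{n-r},\ldots,F_1$ with the constants as parameters (which is in fact how the paper's examples proceed), and you correctly identify that the assemblage of leafwise integrating factors is the delicate point; your repair, applying the Frobenius theorem to the full integrable systems $\mathcal{S}(\Lambda_{k-1})$ (their ideals are differential ideals, cf.\ Definition~\ref{Qsolvabledual} and Remark~\ref{weakercondition}(c)) and completing $dF_{k+1},\ldots,dF_{n-r}$ to a spanning set with $dF_k=\mu_k\omega_k+\sum_{l>k}(\cdot)\,dF_l$, $\mu_k$ nowhere zero by pointwise independence, is sound, although Remark~\ref{remarkPfaffian} as stated covers a single 1-form, so you are really invoking Frobenius for a higher-rank system. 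Two small inaccuracies: ``extracting $F_k$ by a single quadrature'' overstates matters, since these Pfaffian equations are completely integrable but in general not solvable by quadrature; and the functional independence of $F_k$ already follows from $\mu_k\neq 0$, so the appeal to $X_k\,\lrcorner\,\omega_k\neq 0$ is only a consistency check. Your route yields an ambient description of the foliation close to computational practice; the paper's route gives a cleaner induction in which each step is the base case on a lower-dimensional manifold, with no parametric smoothness to verify.
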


\begin{proof}
By Proposition \ref{qsolv_struct_dual} the $\mathcal{C}^{\infty}$-structure for $\mathcal{Z}$ can be given in terms of vector fields $\langle X_1,\ldots,X_{n-r}\rangle$ or by a $\mathcal{C}^{\infty}$-structure of 1-forms  $\langle \omega_1,\ldots, \omega_{n-r}\rangle$ satisfying conditions  \eqref{eq_condiciondual} (i.e., $\mathcal{Z}$-related).

According to condition (b) in Definition \ref{Qsolvabledual}, the ideal $\mathcal{I}(\Lambda_{n-r-1})$ is a differential ideal. Therefore  $
\omega_{n-r}\wedge d\omega_{n-r}=0
$
and hence the Pfaffian equation $$\omega_{n-r}\equiv 0$$ is completely integrable. Given $q\in U,$ let  $\Sigma_{n-r}$ be a $(n-1)$-dimensional integral manifold   of $\mathcal{S}(\{\omega_{n-r}\})$  passing through $q$ \cite[Theorem 2.32]{warner}. 
Consider a local parametrization of $\Sigma_{n-r}$ defined by a smooth imbedding \cite[Definition 1.27]{warner}
$\iota:\bar{U}
\to U$ on some open set $\bar{U}
\subset \mathbb{R}^{n-1}.$ 
We define
\begin{equation}
   \bar{\omega}_{i}:=\iota^*(\omega_{i})\, {\text{ for }} 1\leq i\leq n-r. 
\end{equation}  
Since  $\Sigma_{n-r}$ is an integral manifold of  $\mathcal{S}(\{\omega_{n-r}\})$ then $\bar{\omega}_{n-r}=\iota^*(\omega_{n-r})=0.$ On the other hand, since $\mathcal{S}(\{\omega_{n-r}\})=\mathcal{S}(\mathcal{X}_{n-r-1})^{\circ},$  the vector fields \begin{equation}
\begin{array}{ll}
\bar{Z_i}:=(\iota^{-1})_*\left({Z_{i}|}_{{\Sigma}_{n-r}}\right) & \, {\text{ for }} 1\leq i\leq r, \\
\bar{X_j}:=(\iota^{-1})_*\left({X_{j}|}_{{\Sigma}_{n-r}}\right) & \, {\text{ for }} 1\leq j\leq n-r-1
    \end{array}
\end{equation} are well defined.  It follows from \eqref{diferenciales}  and $\bar{\omega}_{n-r}=0$ that, for $1\leq i\leq n-r-1,$
$$d\bar{\omega}_i=\iota^*(d\omega_i)=\bar{\sigma}_i\wedge\bar{\omega}_i+\displaystyle\sum_{j=i+1}^{n-r-1}\bar{\sigma}_{ij}\wedge\bar{\omega}_j,$$ where we have written $\bar{\sigma}_i:=\iota^*(\sigma_i),$ and $\bar{\sigma}_{ij}:=\iota^*(\sigma_{ij}).$
This proves that  $\langle \bar{\omega}_1,\ldots, \bar{\omega}_{n-r-1}\rangle$ is a \cinf-structure of 1-forms for $\bar{\mathcal{Z}}=\mathcal{S}(\bar{Z}_1,\ldots,\bar{Z}_{r})$ on $\bar{U}
\subset \mathbb{R}^{n-1}.$ It can be checked that, by construction,  $\langle \bar{X}_1,\ldots, \bar{X}_{n-r-1}\rangle$ is a \cinf-structure of  $\bar{\mathcal{Z}}$ and that it is $\bar{\mathcal{Z}}$-related  to $\langle \bar{\omega}_1,\ldots, \bar{\omega}_{n-r-1}\rangle.$

In consequence, $\mathcal{I}(\{\bar{\omega}_{n-r-1}\})$ is a differential ideal; hence the Pfaffian equation $$\bar{\omega}_{n-r-1}\equiv 0$$ is completely integrable.
As before, let $\bar{\Sigma}_{n-r-1}$  denote an  $(n-2)$-dimensional integral manifold of $\mathcal{S}(\{\bar{\omega}_{n-r-1}\})$  passing through $\bar{q}=\iota^{-1}(q).$ 
By construction, 
${\Sigma}_{n-r-1}:=\iota(\bar{\Sigma}_{n-r-1})$ is a submanifold of  ${\Sigma}_{n-r}$ and it is an integral manifold of  $\mathcal{S}(\{\omega_{n-r-1},\omega_{n-r}\})=\mathcal{S}(\mathcal{X}_{n-r-2})^{\circ}.$

This process can be continued by 
using the restriction of $\bar{\omega}_{n-r-2}$ to $\bar{\Sigma}_{n-r-1},$ which also defines  a completely integrable  Pfaffian equation in a  $(n-2)$-dimensional space. The corresponding integral manifold is of dimension $n-3$  and it can imbedded into ${\Sigma}_{n-r-1}$ (and hence into ${\Sigma}_{n-r}$)  through the composition of the
local parametrizations of $\bar{\Sigma}_{n-r-1}$ and ${\Sigma}_{n-r}.$  By construction, this $(n-3)$-dimensional submanifold is  an integral manifold of  $\mathcal{S}(\Lambda_{n-r-3})=\mathcal{S}(\mathcal{X}_{n-r-3})^{\circ}.$
 In the last step we obtain a $r$-dimensional integral manifold 
of $\mathcal{S}(\Lambda_0)=\mathcal{S}(\mathcal{Z})^{\circ}.$
\end{proof}

Observe that the previous proof provides a systematic procedure to construct the integral manifolds. We refer to the reader to Example  \eqref{exampleorden3} (for instance) to see how this constructive procedure works in practice. 

Theorem \ref{maintheorem} extends the class of objects that can be used to integrate involutive distributions because $\mathcal{C}^{\infty}$-structures satisfy weaker conditions than solvable structures (see Remark \ref{weakercondition} (c)). The cost is that the corresponding Pfaffian equations given in Theorem \ref{maintheorem}, although  completely integrable, might not be defined by closed 1-forms, as in the case of solvable structures.

\section{$\mathcal{C}^{\infty}$-structures and differential equations }\label{applications}
\subsection{$\mathcal{C}^{\infty}$-structures for ODEs: General case}\label{scalarODE}

In this section we apply the results obtained in Section \ref{mainresults} to the particular case in which the involutive distribution $\mathcal{Z}$ corresponds to the distribution $\mathcal{S}(\{A\})$  generated by the vector field \eqref{campoA} associated to an $m$th-order ordinary differential equation of  the form \eqref{ODE1}.

A $\mathcal{C}^{\infty}$-structure $\langle X_1,\ldots, X_{m} \rangle$ for $\mathcal{S}(\{A\})$ will be simply called a $\mathcal{C}^{\infty}$-structure  of  equation \eqref{ODE1}. According to Definition \ref{Qsolvable}, the vector field  $X_1$ must be a $\mathcal{C}^{\infty}$-symmetry of the distribution $\mathcal{S}(\{A\}),$ i.e., a generalized $\mathcal{C}^{\infty}$-symmetry of (\ref{ODE1}) in the sense of Definition \ref{general_cinf-sym} (see Remark \ref{remark_cinfsym} (c)):
\begin{equation}\label{paraX1}
    [X_1,A]=\lambda^1 X_1 + c^1 A,
\end{equation}
for some functions $\lambda^1,c^1\in \mathcal{C}^{\infty}(M^{(m-1)}).$
The second vector field $X_2$ in the \cinf-structure must satisfy relationships of the form:
\begin{equation}\label{paraX2}
    \begin{array}{rcl}
   \left[X_2,A\right]  & =&  \lambda^2 X_2+d_1^2 X_1+c^2 A,\\
   \left[X_2,X_1\right]  & =&  \gamma_1^2 X_2+\widetilde{d}_{1}^2 X_1+\widetilde{c}_1^2 A.
\end{array}
\end{equation}

Similarly, for the remaining elements in the \cinf-structure, $X_i,$ for $i=1,\ldots,m,$ we must have
\begin{equation}\label{paraXj}
\begin{array}{rcl}
[X_i,A]&=&\lambda^i X_i+d_1^i X_1+\ldots+d_{i-1}^iX_{i-1}+c^iA,\\

[X_i,X_j]&=&\gamma_j^i X_i+\widetilde{d}_1^i X_1+\ldots+\widetilde{d}_{i-1}^i X_{i-1}+\widetilde{c}_j^iA \quad \mbox{for} \quad j<i,\end{array}
\end{equation}
where the coefficients are smooth functions on $M^{(m-1)}.$
 
As a consequence of  Remark \ref{remark_cinfsym} (d),  the last vector field $X_m$ can be any vector field pointwise linearly independent of $\{A,X_1,\ldots, X_{m-1}\}.$

The immediate application of Theorem \ref{maintheorem} yields the following theorem.

\begin{theorem}\label{estructuraODE}
A $\mathcal{C}^{\infty}$-structure for a $m$th-order equation  \eqref{ODE1} permits to  integrate the equation by solving successively $m$ completely integrable Pfaffian equations.
\end{theorem}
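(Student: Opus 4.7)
The statement is an immediate specialization of Theorem \ref{maintheorem}, and my plan is essentially to verify that the hypotheses match and to count dimensions. The vector field $A$ in \eqref{campoA} lives on the open subset $M^{(m-1)} \subseteq J^{m-1}(\mathbb{R},\mathbb{R})$, which is locally an open subset of $\mathbb{R}^{m+1}$ with coordinates $(x, u, u_1, \ldots, u_{m-1})$. Thus in the notation of Section \ref{mainresults} we have $n = m+1$, the distribution $\mathcal{Z} = \mathcal{S}(\{A\})$ is of rank $r = 1$, and $\mathcal{Z}$ is trivially involutive.

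First I would observe that a $\mathcal{C}^{\infty}$-structure $\langle X_1,\ldots, X_m\rangle$ of the ODE \eqref{ODE1}, as introduced just above the theorem via the bracket conditions \eqref{paraX1}--\eqref{paraXj}, is by definition a $\mathcal{C}^{\infty}$-structure for $\mathcal{Z} = \mathcal{S}(\{A\})$ in the sense of Definition \ref{Qsolvable}: condition \eqref{paraX1} says precisely that $X_1$ is a $\mathcal{C}^{\infty}$-symmetry of $\mathcal{S}(\{A\})$, and conditions \eqref{paraXj} say that for each $k \in \{2,\ldots,m\}$, $X_k$ is a $\mathcal{C}^{\infty}$-symmetry of the rank $k$ distribution $\mathcal{S}(\mathcal{X}_{k-1}) = \mathcal{S}(\{A, X_1, \ldots, X_{k-1}\})$.

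Next I would apply Theorem \ref{maintheorem} to $\mathcal{Z}$. Since $n - r = (m+1) - 1 = m$, the theorem guarantees that the integral manifolds (that is, the integral curves in $M^{(m-1)}$) can be obtained by successively solving $m$ completely integrable Pfaffian equations, constructed from a $\mathcal{C}^{\infty}$-structure of 1-forms $\langle \omega_1, \ldots, \omega_m\rangle$ which is $\mathcal{Z}$-related to $\langle X_1,\ldots, X_m\rangle$ via Proposition \ref{qsolv_struct_dual}. Since every integral curve of $A$ is the prolongation to $M^{(m-1)}$ of a solution of \eqref{ODE1}, the ODE is integrated in the stated sense.

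There is really no main obstacle: the content of the result is already contained in Theorem \ref{maintheorem}, and the proof is a dimension count combined with a verification that the definitions in the ODE setting coincide with the general definitions. The only thing worth emphasizing is that the last vector field $X_m$ is unconstrained (by Remark \ref{remark_cinfsym}(d) together with Remark \ref{involutive}(d)) because $\mathcal{S}(\mathcal{X}_{m-1})$ has corank $1$ in $M^{(m-1)}$, so producing the full structure requires genuine $\mathcal{C}^{\infty}$-symmetry conditions only at the first $m-1$ stages.
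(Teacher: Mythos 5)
Your proposal is correct and follows essentially the same route as the paper, which derives Theorem \ref{estructuraODE} as an immediate specialization of Theorem \ref{maintheorem} to the rank-one involutive distribution $\mathcal{S}(\{A\})$ on $M^{(m-1)}$, with $n=m+1$, $r=1$, and hence $n-r=m$ Pfaffian equations. Your explicit check that the bracket conditions \eqref{paraX1}--\eqref{paraXj} realize Definition \ref{Qsolvable}, and the remark about the last field $X_m$ being unconstrained, are consistent with the paper's discussion surrounding the theorem.
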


To apply this theorem we can proceed as follows. Let $\boldsymbol{\Omega}=dx\wedge du\wedge \ldots \wedge du_{m-1}$ be the volume form in $M^{(m-1)}.$ Given the $\mathcal{C}^{\infty}$-structure $\langle X_1,\ldots, X_{m} \rangle$ for $\mathcal{S}(\{A\}),$ we construct the following 1-forms: 
\begin{equation}\label{sigmas}
\begin{array}{lll}    
\omega_i&=&{X_1\,\lrcorner\,X_2\,\lrcorner\,\ldots\,\lrcorner\,\widehat{X_i}\,\lrcorner\,\ldots\,\lrcorner\,X_m\,\lrcorner\,A\,\lrcorner\,\boldsymbol{\Omega}}, \quad 1\leq i\leq m,
\end{array}
\end{equation}
where $\widehat{X_i}$ indicates omission of $X_i.$ It can be checked that $\langle \omega_1,\ldots,\omega_{m}\rangle$ is a \cinf-structure of 1-forms for  $\mathcal{S}(\{A\})$ in the sense of Definition \ref{Qsolvabledual},  and that, for $1\leq k\leq m,$ the condition  $\mathcal{S}(\mathcal{X}_k)^{\circ}=\mathcal{S}(\Lambda_k)$ is satisfied. 
As a consequence, the ideal $\mathcal{I}(\{\omega_m\})$ is a differential ideal, i.e., 
$$ 
\omega_{m} \wedge d\omega_{m}=0.
$$ 
Therefore, there exists a first integral  $I_m$ of $\mathcal{S}(\Lambda_{m-1}),$ that according to \eqref{edpintegralprimera} satisfies
\begin{equation}
    dI_m\wedge \omega_{m}=0.
\end{equation}
 Since $A\,\lrcorner\,\omega_m=0,$ the function $I_m$ is a first integral of $A.$ Also  note that $X_{k}(I_m)=0$ for $1\leq k\leq m-1.$ 
The corresponding level sets  of $I_m$ define  integral manifolds $\Sigma_m$ of the involutive distribution $\mathcal{S}(\mathcal{X}_{m-1})=\mathcal{S}(\{A,X_1,\ldots,X_{m-1}\}).$

Since $\mathcal{I}(\{\omega_{m-1},\omega_m\})$ is a differential ideal, the restriction of the 1-form $\omega_{m-1}$ to $\Sigma_m,$ which will be denoted by $\bar{\omega}_{m-1},$ is Frobenius integrable. As before,  a  first integral $\bar{I}_{m-1}$ of $\bar{\omega}_{m-1},$ arises from the condition 
\begin{equation}
    d\bar{I}_{m-1}\wedge \bar{\omega}_{m-1}=0.
\end{equation}
 The procedure continues in a similar way,  by integrating the  remaining completely integrable Pfaffian equations defined by the successive restrictions of the 1-forms in the \cinf-structure.  Observe that at each stage the  completely integrable Pfaffian equation has  one less variable than in the previous step.

In particular, the integrability by quadratures in terms of solvable structures (see Section \ref{symmetries}) follows from Theorem \ref{estructuraODE} for the special case in which all the vector fields $X_i$ in the $\mathcal{C}^{\infty}$-structure are symmetries of $\mathcal{S}(\mathcal{X}_{i-1}),$ for $1\leq i\leq m.$

\subsection{\cinf-structures for second-order ODEs}
For a second-order ODE
\begin{equation}\label{orden2}
u_2=\phi(x,u,u_1),
\end{equation} 
the first element $X_1$ of a \cinf-structure can be given by the first-order $\lambda$-prolongation of a known $\mathcal{C}^{\infty}$-symmetry of the equation. By considering its canonical representative   $(\partial_u,\lambda),$ it can be assumed without loss of generality that $X_1=(\partial_u)^{[\lambda,(1)]}=\partial_u+\lambda\partial{u_1}$ and hence $[X_1,A]=\lambda X_1$ (see \cite{muriel2009} for details). According to Theorem \ref{estructuraODE}, the corresponding 1-form $\omega_2$ defined in \eqref{sigmas},  \begin{equation}\label{sigmalambda1}
    \omega_2=X_1\,\lrcorner\,A\,\lrcorner\,\boldsymbol{\Omega}=(u_1\lambda-\phi)dx-\lambda du+du_1,
\end{equation} 
is Frobenius integrable. It can be checked that the condition $\omega_2\wedge d\omega_2=0$ implies that function $\lambda=\lambda(x,u,u_1)$ must be a solution of
\begin{equation}
    \label{ecudet}
    \lambda_x+\lambda_u u_1+\lambda_{u_1}\phi+\lambda^2=\phi_u+\lambda\phi_{u_1}
\end{equation} and conversely. Condition \eqref{ecudet} is precisely the determining equation that must satisfy $\lambda$ in order to $(\partial_u,\lambda)$
be a $\mathcal{C}^{\infty}$-symmetry of equation \eqref{orden2} \cite{muriel2009}.

By Remark \ref{involutive} (d) the second element of a  $\mathcal{C}^{\infty}$-structure can be any vector field $X_2$ such that $\{A,X_1,X_2\}$ is pointwise linearly independent. For instance, $X_2=\partial_{u_1}.$ In this case, $X_2\,\lrcorner\,X_1\,\lrcorner\,A\,\lrcorner\,\boldsymbol{\Omega}=\boldsymbol{\Omega}(A,X_1,X_2)=1.$ For the $\mathcal{C}^{\infty}$-structure $\langle X_1,X_2\rangle$ the corresponding 1-form \eqref{sigmas} becomes
\begin{equation}\label{sigmalambda2}
 \begin{array}{ll}  
 \omega_1&={X_2\,\lrcorner\,A\,\lrcorner\,\boldsymbol{\Omega}}=u_1dx-du.
 \end{array}\end{equation} 
In consequence, $\langle \omega_1,\omega_2\rangle$ is a $\mathcal{C}^{\infty}$-structure of 1-forms for equation \eqref{orden2}, where the 1-forms $\omega_1$ and $\omega_2$ are given in   \eqref{sigmalambda2} and \eqref{sigmalambda1}, respectively.

In examples \ref{example1lie} and \ref{exampleord2sinlie} we use this $\mathcal{C}^{\infty}$-structure to integrate second-order ODEs by using Theorem \ref{estructuraODE}.

The equation in Example \ref{example1lie} admits just one Lie point symmetry, that can be used to reduce the second-order equation to an Abel-type equation. For most of this kind of problems, the classical  Lie  method fails, due to the difficulty in solving the reduced equation or in evaluating the necessary quadrature  to recover the solution to the initial equation.

\begin{example}\label{example1lie}
Consider the second-order ODE
\begin{equation}\label{orden2_con1Lie}
u_2=-\dfrac{(u_1+1)(x+u) }{x u },\quad xu\neq0,
\end{equation}
which admits only the Lie point symmetry \begin{equation}\label{simetriadelie}
    \mathbf{v}=x \partial_x+u\partial_u.
\end{equation}
The classical order-reduction method leads to the Abel-type equation 
\begin{equation}\label{abel}
  w_y ={\frac { \left( y+1\right)^2  }{y}} w^{3}+{\frac {
 \left( 2\,y+1 \right)  }{y}w^{2}},
\end{equation} where \begin{equation}\label{cambioabel}
    y={\dfrac {u  }{x}}\quad\mbox{and}\quad w={\dfrac {x}{ u_1 x-u }}.
\end{equation}
 Maple system provides a rather involved implicit solution of \eqref{abel} in terms of the exponential integral function, from which it does not seem  possible to recover the general solution for \eqref{orden2_con1Lie}.

In order to find a $\mathcal{C}^{\infty}$-structure for equation \eqref{orden2_con1Lie} we first seek for a $\mathcal{C}^{\infty}$-symmetry  of the form $(\partial_u,\lambda).$ The solution  $\lambda=\dfrac{u_1+1}{u}$ for the corresponding determining equation \eqref{ecudet} can be easily determined. Therefore we can choose as the first element of the $\mathcal{C}^{\infty}$-structure the vector field   $$X_1=(\partial_u)^{[\lambda,(1)]}=\partial_u+\dfrac{u_1+1}{u}\partial_{u_1},$$ which satisfies $[X_1,A]=\lambda X_1.$ The corresponding form \eqref{sigmalambda1} becomes 
$$\omega_2=X_1\,\lrcorner\,A\,\lrcorner\,\boldsymbol{\Omega}=\dfrac{(u_1+1)(x+u+x u_1)}{x u}  dx-\dfrac{u_1+1}{u}du+d u_1.$$
According to Theorem \ref{estructuraODE} the Pfaffian equation $\omega_2\equiv 0$ is completely integrable. 

In order to integrate  $\omega_2\equiv 0$, we first observe that, for the admitted Lie point symmetry \eqref{simetriadelie}, we have that $V=\mathbf{v}^{(1)}=x\partial_x+u\partial_u$ satisfies
$$[V,X_1]=-X_1,$$ i.e., $V$ is a symmetry of the distribution $\mathcal{S}(\{A,X_1\}).$ In this situation, according to Theorem 2.1 in \cite{sherring1992geometric}, it is possible to use the symmetry $V$ to construct directly an   integrating factor of $\omega_2.$ Such integrating factor is given by \begin{equation}\label{mu2ex1}
    \mu_2(x,u,u_1)=\dfrac{1}{V\,\lrcorner\,\omega_2}=\dfrac{u}{x(1+u_1)^2}
\end{equation}  and a corresponding primitive for the exact 1-form $\mu_2\omega_2$ arises by a simple quadrature:
$$
I_2(x,u,u_1)=\ln(x)-\dfrac{u}{x(1+u_1)}.
$$
We can choose $X_2=\partial_{u_1}$ as the second element of a $\mathcal{C}^{\infty}$-structure  for the equation, because $\{A, X_1,X_2\}$ are pointwise linearly independent (see Remark \ref{involutive} (d)). Indeed, it can be checked that the following commutation relations hold: 
$$[X_1,A]=\lambda X_1,
\qquad [X_2,A]=X_1-\left(\dfrac{u_1x+2x+u}{xu}\right)X_2,
\qquad [X_2,X_1]=\dfrac{1}{u}X_2.$$ Observe that neither $X_1$ nor $X_2$ are symmetries of $\mathcal{S}(\{A\})$ and  that $X_2$ is not a symmetry of $\mathcal{S}(\{A,X_1\}).$

The second completely integrable Pfaffian equation associated to the $\mathcal{C}^{\infty}$-structure $\langle X_1,X_2\rangle$ is defined by the restriction of $\omega_1=X_2\,\lrcorner\,A\,\lrcorner\,\boldsymbol{\Omega}= u_1 d x-du$ to the hypersurface defined by $ I_2(x,u,u_1)=C_2,$ $C_2\in \mathbb{R}.$
Such hypersurface can be parametrized through 
$$
\iota_2: (x,u) \mapsto \left(x,u,\dfrac{  u}{x(
\ln(x)-C_2) }-1\right)
$$
and therefore
$$
\bar{\omega}_1:=\iota_2^*(\omega_1)=\left(\dfrac{u}{x(
\ln(x)-C_2) }-1\right) dx-du.
$$ 
The 1-form $\bar{\omega}_1$ is completely integrable, although is not closed. However, it is easy to find an integrating factor that only depends on $x:$ 
\begin{equation}\label{mu1ex1}
    \mu_1(x;C_2)= \frac{1}{\ln(x)-C_2}.
\end{equation} A corresponding primitive arises by a simple quadrature:
$$
I_1(x,u;C_2)=\frac{u}{C_2-\ln(x)}+\phi(x;C_2),$$
where $$
\phi'(x;C_2)=\frac{1}{C_2-\ln(x)},$$ and the prime denotes derivation with respect to $x.$

Consequently, if we solve for $ u $ in  $I_1(x,u;C_2)=C_1$, where $ C_1\in \mathbb{R} $, we obtain the explicit solution of (\ref{orden2_con1Lie}) in the form:
$$u(x)=(\ln(x)-C_2)(\phi(x;C_2)-C_1),\qquad C_1,C_2\in \mathbb{R}.$$
It is worth noting that the general solution of the Abel equation \eqref{abel} can be obtained as by-product of our procedure. By using \eqref{cambioabel}, the  general solution of the Abel equation \eqref{abel} can be expressed in parametric form as follows: 
    $$\begin{array}{lll}
         y(x) &=&  {\dfrac {(\ln(x)-C_2)(\phi(x;C_2)-C_1)  }{x}},\\
        w(x) &=&  -{\dfrac {x}{ \left(\ln  (x)-C_2 \right) \left(\phi(x;C_2)-C_1\right)- \phi(x;C_2) + C_1+x}}.
    \end{array}$$ 
\end{example}

In the following example we apply the \cinf-structure method to completely solve a second-order ODE lacking  Lie point symmetries.

\begin{example}\label{exampleord2sinlie}

It can be checked that the second-order ODE \begin{equation}\label{orden2sinLie}
u_2+\dfrac{x(u_1+x)}{u}+\dfrac{u}{x^4}+1=0, \quad xu\neq 0
    \end{equation}
does not admit Lie point symmetries. However, it is not difficult to find a particular solution of the form $\lambda=a(x,u)u_1+b(x,u)$ to the corresponding determining equation \eqref{ecudet}, giving rise to the $\mathcal{C}^{\infty}$-symmetry $(\partial_u,\lambda)$ defined by the function
$\lambda=\dfrac{u_1+x}{u}.$
Therefore
$$
X_1=(\partial_u)^{[\lambda,(1)]}=\partial_u+\dfrac{u_1+x}{u}\partial_{u_1}
$$
can be chosen as the first vector field of a \cinf-structure for \eqref{orden2sinLie}. 

The corresponding 1-form
$\omega_2=X_1\,\lrcorner\,A\,\lrcorner\,\boldsymbol{\Omega}$ becomes
\begin{equation}\label{sigma2ej2}
    \omega_2=\left(\dfrac{(u_1+x)^2}{u}+\dfrac{u}{x^4}+1\right)dx-\dfrac{u_1+x}{u}du+du_1.
    \end{equation}

The condition $dI_2\wedge \omega_2=0$ provides a determining system of partial differential equations for a first integral $I_2$ of $\mathcal{S}(\{\omega_2\}),$ where the subscripts denote the corresponding partial derivatives: 
\begin{align}
\label{una}  (I_2)_u+ \left( \dfrac{x+u_1}{u}\right) \, (I_2)_{u_1} & =0,\\
\label{dos}    (I_2)_x+  \left(\dfrac{(u_1+x)^2}{u}+\dfrac{u}{x^4}+1\right) \, (I_2)_{u_1}  & =0.
\end{align}
The general solution to equation \eqref{una} becomes \begin{equation}\label{ansatz}
    I_2(x,u,u_1)=F\left(x,\dfrac{x+u_1}{u}\right),
\end{equation} where $F=F(x,t)$ is an arbitrary smooth function. From equation \eqref{dos} it follows that the function $F=F(x,t)$ must satisfy: 
\begin{equation}
\left(\dfrac{1}{x^4}+t^2\right) F_t -F_x=0,
\end{equation}
which admits the particular solution $F(x,t)=\arctan\left({t x^2-x}\right)-\dfrac{1}{x}.$ Therefore, by \eqref{ansatz}, the function 
\begin{equation}
  I_2(x,u,u_1)=\arctan\left(\dfrac{x(xu_1+x^2-u)}{u}\right)-\dfrac{1}{x} 
\end{equation} is a first integral of $\mathcal{S}(\{\omega_2\}).$ Therefore  $\Sigma_2:=\{(x,u,u_1)\in M^{(1)}: I_2(x,u,u_1)=C_2\},$ where $C_2 \in \mathbb{R},$ is an integral manifold of $\mathcal{S}(\{A, X_1\}).$

In order to complete the $\mathcal{C}^{\infty}$-structure, let us recall that by Remark \ref{involutive} (d), $X_2$ may be any vector field such that $\{A, X_1,X_2\}$  is a set of pointwise linearly independent vector fields. If we choose the vector field $X_2=\partial_{u_1}$  then 
$\omega_1=u_1dx- du.$ 

We use the parametrization 
$$\iota_2(x,u)=\left(x,u, \dfrac{u}{x^2}{\tan\left(C_2+\dfrac{1}{x}\right)}-x+\dfrac{u}{x}\right)$$ of the integral manifold $\Sigma_2$ to calculate 
\begin{equation}\label{sigma1res}
 \bar{\omega}_1:=   \iota_2^*(\omega_1)=\left( \dfrac{u}{x^2}{\tan\left(C_2+\dfrac{1}{x}\right)}-x+\dfrac{u}{x}\right)dx-du.
\end{equation}
Although \eqref{sigma1res} is not closed, it is easy to calculate an integrating factor that only depends on $x$:
\begin{equation}
\mu_1(x;C_2)= \dfrac{1}{x\cos\left(C_2+\dfrac{1}{x}\right)}.    
\end{equation}
A primitive for the exact form $\mu_1\bar{\omega}_1$ becomes 
\begin{equation}
    I_1(x,u;C_2)=-\dfrac{u}{x \cos\left(C_2+\dfrac{1}{x}\right)}+\psi(x;C_2),
\end{equation} where $\psi(x;C_2)$ satisfies 
$$\psi'(x;C_2)=-\dfrac{1}{\cos\left(C_2+\dfrac{1}{x}\right)}.$$ 
From the implicit solution  $I_1(x,u;C_2)=C_1,$ $C_1\in \mathbb{R},$ we finally obtain the explicit solution for equation \eqref{orden2sinLie} in the form
\begin{equation}
    u(x)=x \cos\left(C_2+\dfrac{1}{x}\right)\left(\psi(x;C_2)-C_1\right),\quad C_1,C_2\in \mathbb{R}.
\end{equation}

\end{example}

\subsection{\cinf-structures for third-order equations: an example}

In the following example we consider a third-order equation which admits just one Lie point symmetry. The associated second-order reduced equation does not admit Lie point symmetries. We first construct a \cinf-structure of the third-order ODE and show how it can be used to integrate the equation completely.  

\begin{example}\label{exampleorden3}

The third-order ODE 
\begin{equation}\label{orden3}
u_3=\frac{x u_2 e^{x}+2 u_1 e^{x}-3 u_2 e^{x}+3 u_2^{2}}{x e^{x} +2 u_1}
\end{equation} 
only admits the Lie point symmetry $\partial_u$, which provides, by means of the transformation $y=x$ and $w=u_1,$ the  obvious order reduction
\begin{equation}\label{reduorden2}
  w_2=\dfrac{y w_1 e^y+2 w e^y-3 w_1 e^y+3 w_1^2}{y e^y+2w}.  
\end{equation}
 It can be checked that equation \eqref{reduorden2} does not admit Lie point symmetries, and consequently, the classical Lie method cannot be applied to solve equation \eqref{orden3}.

In order to construct a $\mathcal{C}^{\infty}$-structure for equation (\ref{orden3}) we can  use the Lie point symmetry $\partial_u$ to get the first element $X_1:=(\partial_u)^{(2)}=\partial_u,$ for which $[X_1,A]=0$ (i.e., the functions $\lambda^1$ and $c^1$ in \eqref{paraX1} are identically null).

For a second element of the $\mathcal{C}^{\infty}$-structure, we can consider a vector field of the form $$X_2 = \partial_{u_1} + \eta(x,u_1,u_2) \partial_{u_2},$$ 
which satisfies $[X_1,X_2]=0$ (i.e., the functions $\gamma_1^2,\bar{d}_1^2$ and  $\bar{d}_1^2$ in \eqref{paraX2} are identically null). The determining equation for the function $\eta$ results from the condition that the distribution $\mathcal{S}(\left\{ A, X_1, X_2\right\})$ must be involutive. By Proposition \ref{caracterizacioninvolutivo} this condition can be formulated in terms of corresponding  1-form $\omega_3$ defined in \eqref{sigmas} as $\omega_3\wedge d\omega_3=0.$  For simplicity we search for a particular solution of the determining equation of the form $\eta(x,u_1,u_2) = a(x,u_1)u_2+b(x,u_1),$ giving rise to 
$$\eta(x,u_1,u_2) = \displaystyle\frac{2(u_2-e^x)}{xe^x+2u_1}.$$
The last element $X_3$ of a $\mathcal{C}^{\infty}$-structure can be any vector field such that the vector fields  $A,X_1,X_2$ and $X_3$  are pointwise linearly independent (see Remark \ref{involutive} (d)). Therefore, the ordered set $\langle X_1,X_2,X_3\rangle$ defined by 
$$
X_1=\partial_u,\quad
X_2=\partial_{u_1}+\dfrac{2(u_2-e^x)}{x e^x+2 u_1} \partial_{u_2},\quad
X_3=\partial_{u_2}
$$
is  a $\mathcal{C}^{\infty}$-structure for equation (\ref{orden3}). This  can be checked by simply computing the corresponding Lie brackets.

Once a  $\mathcal{C}^{\infty}$-structure has been determined, we can use the algorithm described in Section \ref{applications} to integrate the given equation. Firstly, we  calculate the corresponding 1-forms (\ref{sigmas}), which in this example become:
$$
\begin{array}{l}
\omega_1=-u_1 dx+du,\quad 
\omega_2=u_2 dx - du_1, \\ \\
\omega_3=-\displaystyle\frac{(x u_2 e^x+2u_1 e^x-u_2 e^x+u_2^2)}{xe^x+2u_1}dx+\displaystyle\frac{2(e^x-u_2)}{xe^x+2u_1}du_1+ du_2.
\end{array}
$$
According to Theorem \ref{maintheorem} the Pfaffian equation $\omega_3\equiv 0$ is completely integrable. The condition $dI_3\wedge \omega_3=0$ provides the following system of determining equations for a first integral $I_3$ of $\mathcal{S}(\{\omega_3\}):$ 
\begin{align}
    \label{ec1} (I_3)_u&=0  ,\\
    \label{ec2} (xe^x+2u_1)(I_3)_{u_1}+2(u_2-e^x)(I_3)_{u_2}&=0 ,\\
    \label{ec3} (xe^x+2u_1)(I_3)_{x}+(u_2^2+xe^xu_2+2u_1e^x-e^xu_2)(I_3)_{u_2}&=0.
    \end{align}
It follows from equations \eqref{ec1} and \eqref{ec2} that
\begin{equation}\label{formaI3}
    I_3=F\left(x,\dfrac{u_2-e^x}{e^x+2u_1}\right),
\end{equation} where $F=F(x,t)$ is, in principle, an arbitrary smooth function. By equation \eqref{ec3} the function $F=F(x,t)$ must satisfy 
$$ t^2F_t+F_x=0,$$ whose general solution becomes $F(x,t)=G\left(x+\dfrac{1}{t}\right).$ In consequence, by \eqref{formaI3}, a particular solution to \eqref{ec1}-\eqref{ec3} is given by:
\begin{equation}
    I_3(x,u,u_1,u_2) = \displaystyle\frac{xu_2 + 2u_1}{u_2-e^x}.
\end{equation}

The next step in our procedure is to restrict $\omega_1$ and $\omega_2$ to the manifold implicitly defined by $I_3(x,u,u_1,u_2)=C_3,$ by defining the imbbeding
$$
\iota_3:(x,u,u_1) \mapsto \left(x,u,u_1, \frac{C_3 e^x+2 u_1}{C_3-x}\right).
$$
We obtain 
$$\begin{array}{l}
\bar{\omega}_1=\iota_3^*(\omega_1)=-u_1 dx+du,\quad
\bar{\omega}_2=\iota_3^*(\omega_2)=\displaystyle\frac{C_3e^x+2u_1}{C_3-x} dx - du_1.\\
\end{array}
$$
According to Theorem \ref{maintheorem} the Pfaffian equation $\bar{\omega}_2\equiv 0$ is completely integrable, i.e.,
$ \bar{\omega}_2 \wedge d \bar{\omega}_2 =0.$ In this case an integrating factor only depending on $x$ can be easily found: 
$$\mu_2(x;C_3)= -(C_3-x)^2.$$
A corresponding primitive of $\mu_2\bar{\omega}_2$ can be calculated by quadrature and it is given by
$$I_2(x,u,u_1;C_3) = C_3e^x(x-C_3-1)+u_1 (x-C_3)^2.$$
In order to restrict  $\bar{\omega}_1$  to $I_2(x,u,u_1;C_3)=C_2,$ we solve for $u_1$ in this last equation and define
$$
\iota_2:(x,u)\mapsto \left(x,u,\dfrac{C_3e^x(C_3-x+1)+C_2}{(C_3-x)^2}\right).
$$
The pullback  of $\bar{\omega}_1$ by $\iota_2$ becomes  
$$
\widetilde{\omega}_1:=\iota_2^*(\bar{\omega}_1)= -\dfrac{C_3e^x(C_3-x+1)+C_2}{(C_3-x)^2}dx+du. 
$$
In this case the 1-form $\widetilde{\omega}_1$ is not only Frobenius integrable, but closed, i.e.,  
$d \widetilde{\omega}_1 =0.$ 
Therefore the 1-form  $\widetilde{\omega}_1$ is  locally exact and  a corresponding primitive can be computed by simple quadrature:
$$I_1(x,u;C_3,C_2)=\displaystyle\frac{C_3(e^x-u)+ux+C_2}{x-C_3}.$$
Finally, from the implicit solution $I_1(x,u;C_3,C_2)=C_1$ we can obtain the exact general solution of equation \eqref{orden3} in explicit form: 
$$
u(x)=\dfrac{C_1(C_3-x)+C_3e^x+C_2}{C_3-x}, \qquad C_1,C_2, C_3 \in \mathbb{R}.
$$

It is interesting to  observe that, as by-product, our procedure also provides the general solution of the second-order equation \eqref{reduorden2}: $$w(y)=\dfrac{C_2-C_3e^y(y-C_3-1)}{(y-C_3)^2}.$$ Recall that equation \eqref{reduorden2} does not admit Lie point symmetries and, as far as a we know, cannot be solved by standard procedures. 
\end{example}

\subsection{\cinf-structures for systems of ODEs}

In this section we show how the results that have been presented in Section \ref{scalarODE}  for scalar ODEs can be  adapted and applied to systems of ODEs.

It is well-known that by introducing auxiliary dependent variables, if that is necessary, any system of ODEs can be transformed into a system of first-order ODEs. We also assume that the system can  be  written in  the normal form: 
\begin{equation}
    \label{sistema1}
u^i_1=F^j(x,u^1,\ldots,u^r), \quad 1\leq i\leq r.
\end{equation} We consider the vector field associated to system \eqref{sistema1}: 
\begin{equation} \label{sistemaA1}
    A=\partial_x+\displaystyle\sum_{i=1}^r F^i(x,u^1,\ldots,u^r){\partial}_{u^i}.
\end{equation}
A \cinf-structure for the distribution generated by $A,$ in the sense of Definition \ref{solvable}, will be called a \cinf-structure for  \eqref{sistema1}. In this case, a  \cinf-structure is an ordered collection of $r$ smooth vector fields $\langle X_1,\ldots,X_r\rangle$ on some open set $M$ of $\mathbb{R}^{r+1}$ that must satisfy relationships of the form \eqref{paraX1}-\eqref{paraXj}, where now $A$ is given by \eqref{sistemaA1}.  The last element $X_r,$ as in the case of scalar ODEs, may be any vector field such that $\{A, X_1,\dots,X_r \}$ is a set of pointwise linearly independent vector fields on $M.$
A $\mathcal{S}(\{A\})$-related \cinf-structure of 1-forms for  system \eqref{sistema1} can be explicitly constructed by using the volume form $\boldsymbol{\Omega}=dx\wedge du^1\wedge \ldots\wedge du^r:$  
\begin{equation}\label{sigmassis}%
\begin{array}{lll}    
\omega_i&=&{X_1\,\lrcorner\,X_2\,\lrcorner\,\ldots\,\lrcorner\,\widehat{X_i}\,\lrcorner\,\ldots\,\lrcorner\,X_r\,\lrcorner\,A\,\lrcorner\,\boldsymbol{\Omega}}, \quad 1\leq i\leq r.
\end{array}
\end{equation}
Such  \cinf-structure of 1-forms permits to solve the system through the integration of $r$ successive completely integrable Pfaffian equations, as stated in Theorem \ref{maintheorem}. This procedure is illustrated in the following example.

\begin{example}\label{examplesystem}

Consider the following system
\begin{equation}\label{eq_ej_sistem}
\left\{\begin{array}{l}
\dfrac{d x(t)}{dt}=\dfrac{t y+x-2 y\tan(t)}{t-\tan(t)}, \\

\dfrac{d y(t)}{dt}=\dfrac{x+(t x-t^2 y-y) \tan(t)}{t^2 -t\tan(t)}
\end{array}\right.
\end{equation}
and denote by $A$ its associated vector field \eqref{sistemaA1}. 

The first  element  of   a \cinf-structure for  $\mathcal{S}(\{A\})$ can be searched, for simplicity,  as a vector field of the form
$X_1=\partial_x+\eta(t) \partial_y.$ The condition $[X_1,A]\in \mathcal{S}(\mathcal\{A,X_1\})$  can be expressed in terms of the corresponding 1-form  $\omega_2=X_1\,\lrcorner\,A\,\lrcorner\,\boldsymbol{\Omega}$ given in \eqref{sigmassis} as $\omega_2\wedge d\omega_2=0.$ This provides the following determining equation for $\eta(t):$
$$
\eta'(t)=\frac{2  \tan (t) -t}{t-\tan (t)}\eta(t)^{2}-\frac{ t^{2} \tan (t)+t +\tan (t) }{t^{2}-t \tan (t)}\eta(t)+\frac{t\tan (t) +1}{t^{2}-t\tan (t) },
$$
which  admits the particular solution $\eta(t)=1/t$.

The second element of the \cinf-structure can be any vector field pointwise linearly independent with $\{A,X_1\}$ (Remark \ref{involutive} (d)). We can choose,  for instance, $X_2=\partial_y.$  Therefore, the ordered set $\langle X_1,X_2  \rangle$ defined by the vector fields 
$$
X_1={\partial_x}+{\dfrac{1}{t}\partial_y}, \quad
X_2=\partial_y
$$
constitutes a \cinf-structure for $\mathcal{S}(\{A\})$.

A  $\mathcal{S}(\{A\})$-related \cinf-structure of 1-forms (i.e., satisfying conditions \eqref{eq_condiciondual}), 
is given by the differential forms corresponding to \eqref{sigmassis}
\begin{equation}\label{omegassistema}
    \begin{array}{l}
    \omega_1=\dfrac{yt+x-2y \tan (t)}{t-\tan (t)} d t-d x,   \\
    \omega_2=\dfrac{y t^{2}\tan (t) +y t-x t \tan (t) - y\tan (t) }{t^2 -t\tan (t)} d t-\dfrac{1}{t} d x+d y.
    \end{array}
\end{equation}

By Theorem \ref{mainresults}, the Pfaffian equation $\omega_2\equiv 0$ is completely integrable. In order to find a first integral of $\mathcal{S}(\{\omega_2\})$ we use the condition $dI_2\wedge \omega_2=0$ and obtain:
\begin{align}
    \label{eq1I2sist} 
  (I_2)_y+t (I_2)_x=0 ,\\
      \label{eq2I2sist}   \left(y t^{2}\tan (t) +y t-x t \tan (t) - y\tan (t) \right)(I_2)_x + \left(t- \tan (t)\right)(I_2)_t=0.
   \end{align}

From \eqref{eq1I2sist} it follows that
$$
I_2(t,x,y)=F\left(t,\dfrac{yt-x}{t}\right),
$$
where $F(t,s)$ is, in principle, an arbitrary smooth function. By using \eqref{eq2I2sist} we conclude that $F$ must satisfy the partial differential equation

\begin{equation}\label{eq2I2sislineal}
    s\left( \tan (t)-\tan (t) t^{2}-t\right) F_s+ t( t- \tan (t) )F_t=0,
\end{equation}
whose  general solution becomes
$$
F(t,s)=G\left( \frac{t s}{t \cos(t)-\sin(t)}\right),
$$
where $G$ is an arbitrary smooth function. Therefore, a particular solution for \eqref{eq1I2sist}-\eqref{eq2I2sist} is 
\begin{equation}\label{I2sis}
    I_2(t,x,y)=\dfrac{yt -x}{t \cos(t)-\sin(t)}.
\end{equation} 

The level set  given by $I_2(t,x,y)=C_2,$ with $ C_2 \in \mathbb{R},$ defines an  integral manifold $\Sigma_2$ of the distribution $\mathcal{S}(\{A,X_1\}).$ A local parametrization is defined by the imbedding 
$$
\iota_2:(t,x) \mapsto \left(t,x,\dfrac{C_2 t \cos(t)-C_2\sin(t)-x}{t}\right).
$$
The restriction of the 1-forms in \eqref{omegassistema} to $\Sigma_2$  become $ \iota_2^*(\omega_2)=0$ (because $\Sigma_2$ is an integral manifold of $\mathcal{S}(\{\omega_2\})$) and 
$$
\bar{\omega}_1:=\iota_2^*(\omega_1)=\dfrac{C_2 t \cos(t)-2C_2\sin(t)+2x}{t} dt-dx.
$$

The Pfaffian equation $\bar{\omega}_1 \equiv 0 $ is also completely integrable. The condition $dI_1\wedge \bar{\omega}_1=0$ gives rise to the equation
\begin{equation}\label{eq2I1sislineal}
\left(2 C_2 \sin (t)-C_2 t\cos (t)-2 x\right) (I_1)_x-t(I_1)_t=0,
\end{equation}
which admits the particular solution
\begin{equation}\label{I1sis}
I_1(t,x,y;C_2)=\dfrac{x-C_2\sin(t)}{t^2}.
\end{equation}

This function is a first integral of  $\bar{A}= (\iota_1^{-1})_*\left({A|}_{{\Sigma}_{2}}\right).$ Therefore $I_1(t,x,y;C_2)=C_1$, with $C_1 \in \mathbb{R}$, defines an integral manifold $\bar{\Sigma}_1$ 
of $\mathcal{S}(\bar{A}),$ which can be parametrized through   the map
$$
\iota_1:t \mapsto (t,C_1 t^2+C_2 \sin(t)).
$$
Finally, $\Sigma:=\iota_2(\bar{\Sigma}_1)$ is an integral manifold of $\mathcal{S}(\{A\})$ that can be parametrized by $\iota:=\iota_2\circ\iota_1:$
$$
\iota: t \mapsto (t,C_1 t^2+C_2 \sin(t),C_2 \cos(t)+C_1 t).
$$
Therefore, $$x(t)=C_1 t^2+C_2 \sin(t),\,y(t)=C_2 \cos(t)+C_1 t,
$$ where  $C_1,C_2\in \mathbb{R},$ is the general solution of system \eqref{eq_ej_sistem}.

Observe that in this example, the first integral $I_2$ in \eqref{I2sis} associated to $\omega_2\equiv 0$ has been obtained by solving the PDEs \eqref{eq1I2sist} and \eqref{eq2I2sislineal}, while the first integral $I_1$ in \eqref{I1sis} associated to $\omega_2\equiv 0$ has been obtained from equation \eqref{eq2I1sislineal}. It can be checked that the characteristic systems associated to these three PDEs correspond to first-order ODEs that are linear, and they are therefore easy to integrate.  This shows that,  although the Pfaffian equations associated to a \cinf-structure  might not be defined by closed 1-forms (as in the case of solvable structures),  there are examples, as in this case, in which such Pfaffian equations  are simple and can be solved by standard procedures. 
  
\end{example}

\section{Discussion and related approaches}
	
	For any involutive distribution $\mathcal{Z}$ we have introduced a new structure of vector fields, called \cinf-structure,   which plays an outstanding role in the determination of the integral manifolds of $\mathcal{Z}$ (Theorem \ref{maintheorem}). For the particular case where  $\mathcal{Z}$ is defined by a single vector field associated with an ODEs system, the new approach greatly enlarges the types of vector fields that have been considered in the last  decades with the ultimate goal of integrating or reducing the underlying differential equations.
	
	In this section we present a short review and synthesis of  some relevant aspects of  approaches to that goal. First of all, it should be clear that these theories have a long history whose origins can be traced back to the  works of F. G. Frobenius and S. Lie. The main ideas have been  gradually evolving over the years by following different objectives and motivations. Some of the Lie's ideas derived to an algebraic synthesis that led to a new branch of Mathematics (the algebraic Lie theory) whose objectives are very different to the original ones. 
	
	The geometric theory of differential equations  emerged thanks to the works of S. Lie,  E. Cartan and others in the first half of the 20th century.  The methods in this theory are somewhat different from the classical (analytical) methods. This makes it difficult  
	an exhaustive comparison and  better understanding   of the achievements in both theories (analytical and geometrical).  This would be,  undoubtedly, an arduous task, because such approaches usually emerge from different motivations, pursue diverse objectives  and, hence, are presented and investigated from different perspectives, by using different mathematical languages and notations. Nevertheless,  we believe that these theories are mutually imbricated and,  despite their differences, any of them can offer a very positive influence on the others. In other words,   any effort to clarify, reinterpret, combine and complement the new results has certainly beneficial effects to the different approaches.

	In the context of the present paper, one of the most important concepts that extend the notion  of solvable algebra of local symmetries is the idea of solvable structure.  According to P. Basarab-Horwath \cite{basarab}, the origins of the concept of solvable
	structure (in the scheme of Frobenius approach) can be traced back to the
	works of Sophus Lie.  At the begining of the 90s,
	thanks to the works of P. Basarab-Horwath \cite{basarab}, J. Sherring and G. E. Prince \cite{sherring1992geometric}, T. Hartl and C. Athorne \cite{hartl1994solvable}, M.A. Barco and G.E. Prince \cite{Barco2001}, and others, the theory of solvable structures received new impulses and inspired many developments from both  analytical and geometrical  point of views.

	With the objective of extending Lie reduction methods to find exact solutions of equations not possessing Lie point symmetries, the concept of  $\lambda$-prolongation, and its characterization in terms of the  invariants-by-derivation property (IBDP) were introduced at the beginning  of the twenty-first century \cite{muriel01ima1, MR2001541}.  E. Pucci and G. Saccomandi provided in \cite{pucci}  an interesting geometric interpretation of these prolongations and introduced the so-called telescopic vector fields. In 2007 D. Catalano-Ferraioli 
proved that these prolongations can be interpreted as  shadows of some nonlocal symmetries, by using the technique of the $\lambda$-coverings \cite{catalano2007}.
	Along the first decade of this century,   G. Gaeta, P. Morando, G. Cicogna and S. Walcher explored several methods of prolongation of vector fields that generalized the concept of $\lambda$-prolongation. Collectively, these prolongations were called twisted prolongations (see the review papers \cite{gaetatwisted} and \cite{gaetacollective} by G. Gaeta and references therein). Depending on the context where the prolongations  are being considered, these generalizations and their corresponding notions of symmetry,  received specific names, according to   the auxiliary object they use. Thus, for instance,  the auxiliary object  for $\lambda$-symmetries of an ODE, is a scalar function $\lambda$ on the jet space; for $\mu$-symmetries of a PDE, the auxiliary object is a semi-basic one-form $\mu=\Lambda_idx^i$, where the $\Lambda_i$ are matrix functions, satisfying  the Maurer-Cartan equation \cite{gaetamorando,gaeta2,morando_mu_symmetries}. Similar extensions were carried out for dynamical systems, and specifically for Hamiltonian systems,  \cite{CicognaHamiltonian,cicogna2013dynamical} and for variational problems \cite{murielolver, cicognagaeta_noether, adrianJPA2018}. A review on these issues appears in \cite{muriel_evolution}.

	In 2012, G. Cicogna, G. Gaeta and S. Walcher introduced the concept of $\sigma$-symmetry in the study of the integrability of systems of ODEs \cite{cicogna2012generalization}.  These authors, in 2013, used $\sigma$-symmetries to separately study  dynamical systems \cite{cicogna2013dynamical} and orbital reducibility \cite{orbital}.  The initial concept is based on the notion of $\sigma$-prolongation (also called joint $\lambda$-prolongation) of a set of  $r$ vector fields defined on the 0th-order jet space $M$   \cite[Definition 2]{cicogna2012generalization}.  In this case, the auxiliary object is a $r\times r$ matrix whose coefficients are smooth functions $\sigma^i_{j}$ on $M^{(1)}.$
	An important aspect about  $\sigma$-prolonged vector fields is that, as in the case of $\lambda$-prolongations, they maintain the IBDP and therefore the $\sigma$-symmetries are as useful as standard ones for reduction purposes. This fact is the key to adapt the original Lie symmetry (or the $\lambda$-symmetry) reduction method, i.e.,   to reduce a system of ODEs which is invariant under a set of $\sigma$-prolonged vector fields. 	
	However, as it is remarked in 
	\cite[Section 3]{cicogna2012generalization}, the IBDP for $\sigma$-prolonged  vector fields, and hence the associated reduction procedure,  require that the  $\sigma$-prolonged  vector fields must constitute an involutive system. Assuming that the base vector fields  are in involution, the authors provide  a sufficient condition for the $\sigma$- prolongations be in involution, although the commutation relationships may be in general different from the original ones.

	   Among  the  questions brought forward in  	\cite[Section 6]{cicogna2012generalization}, the authors propose  the  
	 possible extensions  of the $\sigma$-symmetry theory to the study of Euler-Lagrange equations  and to the study of nonlocal symmetries,  in the line of the works by D. Catalano-Ferraioli and P. Morando  on solvable structures for ODEs \cite{catalano_nonlocal, Catalano_Paola}. These authors adapted solvable structures to local and nonlocal symmetries (classical and higher), introducing the notion of nonlocal solvable structure. For the problem of determining solvable structures of certain covering system, which in practice is usually a difficult one, the authors observed that previously known symmetry generators can help to find them.
  
  The authors of \cite{cicogna2012generalization} did also encourage the investigation of the use of $\sigma$-symmetries for a  possible generalization of solvable structures. As far as we know, this is  still  an open problem. Observe that, in some sense, the results in this paper  on \cinf-structures are aimed in that direction, although strictly speaking they are not based  on $\sigma$-symmetries, as suggested in \cite{cicogna2012generalization}, but on \cinf-symmetries of distributions.

	In \cite{paola.frobenius}, P. Morando presented a dimensional reduction method for a  vector field $Z$ defined on an $n$-dimensional manifold, based on the knowledge of a suitable integrable distribution $\mathcal{D}_0$ of $r<n$ vector fields.   The reconstruction  of the general solution to the system associated with $Z$ from the solutions of the reduced system was also analyzed. In particular, if $\mathcal{D}_0$ is a solvable structure for $Z$, then  the integral lines of $Z$ can be reconstructed by quadrature from the reduced ones. This kind of distributions are named in \cite{paola.frobenius} {\it{partial solvable structures}}. For the case of systems of ODEs, the reduction procedure based on $\sigma$-symmetries fits in with this general reduction method. In the particular case in which the $\sigma$-symmetry is a partial solvable structure (called a triangular $\sigma$-symmetry), the reconstruction problem can be solved by quadratures.   
 
In \cite{conmorando,mmr} the integrability of any $\mbox{SL}(2,\mathbb{R})$-invariant $n$th-order ODE  was tackled  by considering solvable structures with respect to the involutive distribution formed by the vector field associated to the equation and the symmetry generators of the nonsolvable symmetry algebra $\mathfrak{sl}(2,\mathbb{R})$. This procedure was later extended for any $n$th-order ODE with a known $k$-dimensional symmetry algebra $\mathcal{G}_k,$ solvable or not, with $1\leq k<n$ (\cite{MurielAIP}). These new structures were named {\it{ generalized solvable structures}} with respect to  $\mathcal{G}_k.$ 

As remarked by G. Gaeta in \cite{gaeta_frobenius}, the above described theories show that the focus must be on distributions rather than on a single vector field, and in involution rather than in the Lie algebraic structure. 


As a summary,   some of the most  significant differences  between \cinf-structures and the previous types of systems of vector fields are highlighted  below. 
	\begin{enumerate}
		\item  The concepts of solvable algebra, solvable structure and \cinf-structure  are established for an involutive distribution $\mathcal{Z}$ of $r$ vector fields on a $n$-dimensional manifold through an ordered set of $n-r$ vector fields $ \langle X_1,\ldots, X_{n-r}\rangle .$  For a solvable structure, as well as for a solvable algebra, the first element  $X_1$ must be a symmetry of $\mathcal{Z};$ this condition is not required for a  \cinf-structure;
	  in the commutation relationships between $X_1$ and the fields in $\mathcal{Z}$ the vector field $X_1$ may appear. Similar situations may occur for the successive elements.
	 \item For a solvable algebra, the Lie bracket of two generators must be a linear combination, with constant coefficients, of the generators.  In particular, the generators constitute an involutive system and have a complete system of joint invariants.  For solvable structures or \cinf-structures, that is not the case.
	  In the commutation relationships between an  $X_i$  and an $X_{i+j}$ there may appear the vector fields  in  $\mathcal{Z}$ and the  coefficients are, in general, non constant functions.
	
		\item The problem of the integration of the distribution $\mathcal{Z}$ by using successive quadratures of locally exact  one-forms  can be solved either with  solvable algebras or with solvable structures.
		The results in this paper show that with a \cinf-structure the integration problem can be splitted into successive integrable Pfaffian equations, although not necessarily integrable by quadratures.

    \item For both partial and generalized solvable structures, the involved vector fields are still symmetries of certain distributions containing the vector field associated to the equation. Nevertheless, it must be remarked that this condition is not required for a \cinf-structure.
    
\item As it has been said, for a \cinf-structure it is allowed that the vector fields of the  distribution  $\mathcal{Z}$ may appear in the commutation relationships. 
	In particular,  the vector fields in a \cinf-structure not necessarily must be in involution. Hence, the integration process carried out in this paper is not based on the IBDP (as for $\sigma$-symmetries); in fact it is more related to the methods based on  solvable structures, as presented in \cite{basarab,hartl1994solvable,Barco2001,Barco2002,BarcoPDE,Vecchi}. In consequence, \cinf-structures permit to construct the integral manifolds by solving successively $n-r$ completely integrable Pfaffian equations. At each step, the space  where each Pfaffian equation is defined has one less dimension than in the previous stage.

\end{enumerate}

\section{Final remarks}

Frobenius Theorem for involutive distributions of vector fields guarantees the existence of integral manifolds through each point, although it does not provide a constructive procedure to find them. Among the several concepts that have been introduced in the literature to obtain constructive integration methods, the concept of solvable structure is one of the  most powerful. Once a solvable structure has been determined, there is a systematic procedure to integrate the underlying distribution by successive quadratures. The main difficulty of this approach may be the determination of the elements of the solvable structure. These elements must be symmetries of the successive involutive distributions associated to the solvable structure.

In this work the concept of $\mathcal{C}^{\infty}$-symmetry, initially introduced for ODEs, has been adapted for involutive distributions of vector fields. This permits to extend the concept of solvable structure and introduce the notion of \cinf-structure for an involutive distribution.

These new objects play a fundamental role in the integrability of the distribution, with a somewhat similar functionality to  $\mathcal{C}^{\infty}$-symmetries in the integration of  ODEs. In this paper it has been proved that the knowledge of a $\mathcal{C}^{\infty}$-structure for a given rank $r$ involutive distribution allows to compute their integral manifolds by solving $n-r$ completely integrable Pfaffian equations. A solvable structure is  a particular case of a $\mathcal{C}^{\infty}$-structure.  Thus, $\mathcal{C}^{\infty}$-structures enlarge the classes of vector fields that can be used to integrate involutive distributions of vector fields. 

The main theoretical results have been applied to the particular case of  distributions generated by the vector fields associated to systems of ordinary differential equations. In particular,  it has been proved that a  $\mathcal{C}^{\infty}$-structure for a given $m$th-order scalar ODE leads to the integration of the equation after solving successively $m$ completely integrable Pfaffian equations. Each of these Pfaffian equations is defined in an space with one less dimension than in the previous stage.  The presented examples show that, although these  Pfaffian equations might not be defined by closed 1-forms, as in the case of solvable structures, they can be equally integrated. This has permitted to integrate completely the considered equations, even when the classical Lie method fails.

The possible limitations in the application of the presented approach come from the potential difficulty in finding particular solutions of the determining equations for the calculation of the elements of a \cinf-structure. In this regard, it would be very convenient to investigate strategies in order to simplify such determining equations, as it has been done in the literature for the search of $\mathcal{C}^{\infty}$-symmetries of some specific form.
Work in this line is currently in progress.

\section*{Acknowledgments}

The authors thank the finantial support from {\it   Junta de Andaluc\'ia }  to the research group FQM--377. A. Ruiz and C. Muriel acknowledge the financial support from {\it FEDER--Ministerio de Ciencia, Innovaci\'on y Universidades--Agencia Estatal
de Investigaci\'on} by means of the project PGC2018-101514-B-I00. 

\bibliographystyle{unsrt}
\bibliography{references}
\end{document}